\theoremstyle{plain}
\theoremstyle{definition}
\newcommand{\stc}{\operatorname{sc}}
\newcommand{\Syn}{\operatorname{Syn}}
\def\dotminus{\mathbin{\ooalign{\hss\raise1ex\hbox{.}\hss\cr
  \mathsurround=0pt$-$}}} 
\begin{document}
\title{Completely Reachable Automata, Primitive Groups and the State Complexity of the Set of Synchronizing Words}
\titlerunning{Completely Reachable Automata, Primitive Groups and State Complexity}

%
%
\author{Stefan Hoffmann\orcidID{0000-0002-7866-075X}}
\authorrunning{S. Hoffmann}
%
\institute{Informatikwissenschaften, FB IV, 
  Universit\"at Trier,  Universitätsring 15, 54296~Trier, Germany, 
  \email{hoffmanns@informatik.uni-trier.de}}
\maketitle              
\begin{abstract}

  We give a new characterization of primitive permutation groups
  tied to the notion of completely reachable automata.
  Also, we introduce sync-maximal permutation groups tied
  to the state complexity of the set of synchronizing words of certain associated automata
  and show that they are contained between the $2$-homogeneous
  and the primitive groups.
  Lastly, we define $k$-reachable 
  groups in analogy with synchronizing groups and motivated by our characterization
  of primitive permutation groups. 
  But the results show that a $k$-reachable permutation group of degree $n$
  with $6 \le k \le n - 6$ is either the alternating or the symmetric group.
  
   %
 
 \keywords{finite automata \and synchronization \and completely reachable automata 
 \and primitive permutation groups \and state complexity 
 } 
\end{abstract}

\section{Introduction}

\todo{In Appendix, den Beweis von rank $n-1$, dann permutation raus. Ist für anderes paper.}

A deterministic semi-automaton is synchronizing if it admits a reset word, i.e., a word which leads to some definite
state, regardless of the starting state. This notion has a wide range of applications, from software testing, circuit synthesis, communication engineering and the like, see~\cite{San2005,Vol2008}.  The famous \v{C}ern\'y conjecture~\cite{Cer64}
states that a minimal length synchronizing word has length at most $(n-1)^2$
for an $n$ state automaton.
We refer to the mentioned survey articles~\cite{San2005,Vol2008} for details. 
An automaton is completely reachable, if for each subset of states 
we can find a word which maps the whole state set onto this subset.
This is a generalization of synchronizability,
as a synchronizing word maps the whole state set to a singleton set.
The class of completely reachable automata was formally introduced in~\cite{DBLP:conf/dcfs/BondarV16},
but already in~\cite{DBLP:journals/combinatorics/Don16,DBLP:journals/corr/Maslennikova14}
such automata appear in the results.
The time complexity of deciding if a given automaton is completely reachable
is unknown. A sufficient and necessary criterion for complete reachability of a given automaton
in terms of graphs and their connectivity
is known~\cite{DBLP:conf/dlt/BondarV18}, but it is not known if these graphs
could be constructed in polynomial time.
A special case of the general graph construction, which gives a sufficient
criterion for complete reachability~\cite{DBLP:conf/dcfs/BondarV16},
is known to be constructible in polynomial time~\cite{DBLP:journals/jalc/GonzeJ19}.
The size of a minimal automaton accepting a given regular
language is called the state complexity of that language.
The set of synchronizing words of a given automaton is a regular ideal language
whose state complexity is at most exponential in the size of the original 
automaton~\cite{DBLP:journals/corr/Maslennikova14,DBLP:journals/ijfcs/Maslennikova19}. 
The \v{C}ern\'y family
of automata~\cite{Cer64,Vol2008}, the first given family yielding the lower bound $(n-1)^2$ for the length of synchronizing words,
is completely reachable and the corresponding sets of synchronizing words have maximal state 
complexity~\cite{DBLP:journals/corr/Maslennikova14,DBLP:journals/ijfcs/Maslennikova19}. 
The notion of primitive permutation groups could be traced back to work by Galois~\cite{Neumann2011} on the solubility of 
equations by radicals. Nowadays, it is a core notion of the theory
of permutation groups~\cite{cameron_1999}.

\subsubsection{Outline and Contribution:} In Section~\ref{sec::preliminaries}, we give definitions
and state known results. Then, in Section~\ref{sec:gen_results},  for completely
reachable automata, we state a sufficient and necessary condition for the set of synchronizing words to have maximal state complexity, which yields a polynomial time decision procedure.
In Section~\ref{sec:grp_case}, 
we introduce a new characterization
of primitive permutation groups, motivated by work on synchronizing and
completely reachable automata and on detecting properties of permutation groups
by functions~\cite{ArCamSt2017,DBLP:journals/tcs/ArnoldS06,DBLP:conf/dcfs/BondarV16,DBLP:journals/combinatorics/Don16}. We relate this to the notion of the state complexity
of the set of synchronizing words. 
Beside the \v{C}ern\'y family~\cite{Cer64,Vol2008},
the properties that a minimal length synchronizing word has quadratic length
and that the set of synchronizing words has maximal state complexity are shared by a wealth of different slowly synchronizing
automata~\cite{DBLP:conf/mfcs/AnanichevGV10,AnaVolGus2013,DBLP:journals/corr/Maslennikova14,DBLP:journals/ijfcs/Maslennikova19}. 
Motivated by this,  we introduce the class
of sync-maximal permutation groups and show that they fit properly between the $2$-homogeneous
and the primitive groups.
Lastly, in Section~\ref{sec:k_reachable_groups} we introduce $k$-reachable groups motivated
by our investigations and the definition of synchronizing groups~\cite{DBLP:journals/tcs/ArnoldS06}.
We show that for almost all $k$, only the symmetric and alternating groups are $k$-reachable.

\section{Preliminaries and Definitions}
\label{sec::preliminaries}


\subsubsection{General Notions:} Let $\Sigma = \{a_1,\ldots, a_k\}$ be a finite set of symbols,
 called an \emph{alphabet}. 
 By $\Sigma^{\ast}$, we denote
the \emph{set of all finite sequences}, i.e., of all words or strings. The \emph{empty word}, i.e., the 
finite sequence of length zero, is denoted by $\varepsilon$. We set $\Sigma^+ = \Sigma^* \setminus \{\varepsilon\}$.
For a given word $w \in \Sigma^*$, we denote by $|w|$
its \emph{length}.
The subsets of $\Sigma^{\ast}$
are called \emph{languages}. 
For $n > 0$, we set $[n] = \{0,\ldots, n-1\}$ and $[0] = \emptyset$.
For a set $X$, we denote the \emph{power set} of $X$ by $\mathcal P(X)$, i.e,
the set of all subsets of $X$. Every function $f : X \to Y$
induces a function $\hat f : \mathcal P(X) \to \mathcal P(Y)$
by setting $\hat f(Z) := \{ f(z) \mid z \in Z \}$.
Here, we will denote this extension also by~$f$.
Let $k \ge 1$. A \emph{$k$-subset} $Y \subseteq X$ is a finite set of cardinality~$k$.
A $1$-set is also called a \emph{singleton set}.
For functions $f : A \to B$ and $g : B \to C$, the \emph{functional composition}
$gf : A \to C$ is the function $(gf)(x) = g(f(x))$, i.e., the function on the right is applied first\footnote{In group theory, usually the other convention is adopted, but we stick to the convention most often seen in formal
language theory.}.

\subsubsection{Automata-Theoretic Notions:} A \emph{finite, deterministic} and \emph{complete} automaton will
be denoted by $\mathscr A = (\Sigma, Q, \delta, s_0, F)$
with $\delta : Q \times \Sigma \to Q$ the state transition function, $Q$ a finite set of states, $s_0 \in Q$
the start state and $F \subseteq Q$ the set of final states. 
The properties of being deterministic and complete are implied by the definition of $\delta$
as a total function.
The transition function $\delta : Q \times \Sigma \to Q$
could be extended to a transition function on words $\delta^{\ast} : Q \times \Sigma^{\ast} \to Q$
by setting $\delta^{\ast}(s, \varepsilon) := s$ and $\delta^{\ast}(s, wa) := \delta(\delta^{\ast}(s, w), a)$
for $s \in Q$, $a \in \Sigma$ and $w \in \Sigma^{\ast}$. In the remainder we drop
the distinction between both functions and will also denote this extension by $\delta$.
For $S \subseteq Q$ and $w \in \Sigma^*$, we write $\delta(S, w) = \{ \delta(s, w) \mid s \in S \}$
and $\delta^{-1}(S, w) = \{ q \in Q \mid \delta(q, w) \in S \}$.
The \emph{language accepted}, or \emph{recognized}, by $\mathscr A = (\Sigma, S, \delta, s_0, F)$ is
$
 L(\mathscr A) = \{ w \in \Sigma^{\ast} \mid \delta(s_0, w) \in F \}.
$
A language $L \subseteq \Sigma^{\ast}$ is called \emph{regular} if $L = L(\mathscr A)$
for some finite automaton $\mathscr A$. 
For a language $L \subseteq \Sigma^{\ast}$ and $u, v \in \Sigma^*$
we define the \emph{Nerode right-congruence} 
with respect to $L$ by $u \equiv_L v$ if and only if
$
 \forall x \in \Sigma : ux \in L \leftrightarrow vx \in L.
$
The equivalence class for some $w \in \Sigma^{\ast}$
is denoted by $[w]_{\equiv L} := \{ x \in \Sigma^{\ast} \mid x \equiv_L w \}$.
A language is regular if and only if the above right-congruence has finite index, and it could
be used to define the minimal deterministic automaton
$
 \mathscr A_L = (\Sigma, Q, \delta, [\varepsilon]_{\equiv_L}, F)
$
with $Q := \{ [w]_{\equiv_L} \mid w \in \Sigma^{\ast} \}$, $\delta([w]_{\equiv_L}, a) := [wa]_{\equiv_L}$
for $a \in \Sigma$, $w \in \Sigma^{\ast}$ and $F := \{ [w]_{\equiv_L} \mid w \in L \}$.
It is indeed the smallest automaton accepting $L$ in terms of states, and we 
will refer to this construction as the minimal automaton~\cite{HopUll79} of $L$.
The \emph{state complexity} of a regular language
is defined as the number of Nerode right-congruence classes.
We will denote this number by $\stc(L)$.
Let $\mathscr A = (\Sigma, Q, \delta, s_0, F)$ be an automaton.
A state $q \in Q$ is \emph{reachable}, if $q = \delta(s_0, u)$
for some $u \in \Sigma^*$. We also say that a state $q$ is reachable from a state $q'$
if $q = \delta(q', u)$ for some $u \in \Sigma^*$.
Two states $q, q'$ are \emph{distinguishable},
if there exists $u \in \Sigma^*$ such that either $\delta(q, u) \in F$ and $\delta(q', u) \notin F$
or $\delta(q, u) \notin F$ and $\delta(q', u) \in F$.
An automaton for a regular language is isomorphic to the minimal automaton
if and only if all states are reachable and distinguishable~\cite{HopUll79}.
A \emph{semi-automaton} $\mathscr A = (\Sigma, Q, \delta)$
is like an ordinary automaton, but without a designated start state and without a set of final 
states. Sometimes, we will also call a semi-automaton simply an automaton if the context
makes it clear what is meant. Also, definitions without explicit reference
to a start state and a set of final states are also valid for semi-automata.
Let $\mathscr A = (\Sigma, Q, \delta)$ be a finite semi-automaton.
A word $w \in \Sigma^*$ is called \emph{synchronizing}, if $\delta(q, w) = \delta(q', w)$
for all $q, q' \in Q$, or equivalently $|\delta(Q, w)| = 1$.
Set $\Syn(\mathscr A) = \{ w \in \Sigma^* \mid |\delta(Q, w)| = 1 \}$.
The \emph{power automaton (for synchronizing words)}
associated to $\mathscr A$ is $\mathcal P_{\mathscr A} = (\Sigma, \mathcal P(Q), \delta, Q, F)$
with start state $Q$, final states $F = \{ \{q \} \mid q\in Q \}$ and the transition function of $\mathcal P_{\mathscr A}$
is the transition function of $\mathscr A$, but applied to subsets of states.
Then, as observed in~\cite{Starke66a},
the automaton $\mathcal P_{\mathscr A}$ accepts the set of synchronizing words,
i.e., $L(\mathcal P_{\mathscr A}) = \Syn(\mathscr A)$.
As for $\{q\} \in F$, we also have $\delta(\{q\}, x) \in F$ for each $x \in \Sigma^*$,
the states in $F$ could all be merged to a single state to get an accepting automaton
for $\Syn(\mathscr A)$.
Also, the empty set is not reachable from $Q$. Hence $\stc(\Syn(\mathscr A)) \le 2^{|Q|} - |Q|$ 
and this bound is sharp~\cite{DBLP:journals/corr/Maslennikova14,DBLP:journals/ijfcs/Maslennikova19}.
We call $\mathscr A$ \emph{completely reachable}, if for any non-empty
$S \subseteq Q$, there exists a word $w \in \Sigma^*$ with $\delta(Q, w) = S$, i.e.,
in the power automaton, every state is reachable from the start state. When
we say a \emph{subset of states} in $\mathscr A$ is \emph{reachable}, we mean reachability in $\mathcal P_{\mathscr A}$.
The state complexity of $\Syn(\mathscr A)$ is maximal, i.e., $\stc(\Syn(\mathscr A)) = 2^{|Q|} - |Q|$,
if and only if all subsets $S \subseteq Q$ with $|S| \ge 2$ are reachable
and at least one singleton subset of $Q$, and all these states are distinguishable in $\mathcal P_{\mathscr A}$.
For \emph{strongly connected automata}, i.e., those for which all states are reachable
from each other, the state complexity of $\Syn(\mathscr A)$
is maximal iff $\mathscr A$ is completely reachable and all $S \subseteq Q$ with $|S| \ge 2$
are distinguishable 
in $\mathcal P_{\mathscr A}$.

\subsubsection{Transformations and Permutation Groups:} Let $n \ge 0$. 
Denote by $\mathcal S_n$ the \emph{symmetric group on $[n]$}, i.e., the group of all permutations of $[n]$.
A \emph{permutation group (of degree $n$)}
is a subgroup of $\mathcal S_n$.
For $n > 1$, the \emph{alternating group} is the unique subgroup of size $n! / 2$
in $\mathcal S_n$, see~\cite{cameron_1999}.
The \emph{orbit} of an element $i \in [n]$ for a permutation group $G$ is the set $\{ g(i) \mid g \in G\}$.
A permutation group $G$ over $[n]$ is \emph{primitive}, if it preserves no non-trivial equivalence
relation\footnote{The trivial equivalence relations on $[n]$ are $[n] \times [n]$
and $\{ (x,x) \mid x \in [n] \}$.}
on $[n]$, i.e., for no non-trivial equivalence relation $\sim \subseteq [n]\times [n]$
we have $p \sim q$ if and only if $g(p) \sim g(q)$ for all $g \in G$ and $p, q \in [n]$.
A permutation group $G$ over $[n]$ is called \emph{k}-homogeneous for some $k \ge 1$,
if for any two $k$-subsets $S,T$ of $Q$, there exists $g \in G$
such that $g(S) = T$. A \emph{transitive} permutation group is the same
as a $1$-homogeneous permutation group.
Note that here, all permutation groups with $n \le 2$ are primitive, and for $n > 2$
every primitive group is transitive. Because of this, some authors exclude
the trivial group for $n = 2$ from being primitive.
A permutation group $G$ over $[n]$ is called \emph{$k$-transitive} for some $k \ge 1$,
if for any two tuples $(p_1, \ldots, p_k), (q_1, \ldots, q_k) \in [n]^k$,
there exists $g \in G$
such that $(g(p_1),\ldots, g(p_k)) = (q_1, \ldots, g_k)$.
By $\mathcal T_n$, we denote the set of all maps on $[n]$.
A submonoid of $\mathcal T_n$ for some $n$ is called a \emph{transformation monoid}.
If the set $U$ is a submonoid (or a subgroup) of $\mathcal T_n$ (or $\mathcal S_n$)
we denote this by $U \le \mathcal T_n$ (or $U \le \mathcal S_n$).
For a set $A \subseteq \mathcal T_n$ (or $A \subseteq \mathcal S_n$),
we denote by $\langle A \rangle$ the submonoid (or the subgroup) generated
by $A$. Let $\mathscr A = (\Sigma, Q, \delta)$ be an semi-automaton and
for $w \in \Sigma^*$ define $\delta_w : Q \to Q$
by $\delta_w(q) = \delta(q,w)$ for all $q \in Q$.
Then, we can associate with $\mathscr A$ the transformation monoid
of the automaton $\mathcal T_{\mathscr A} = \{ \delta_w \mid w \in \Sigma^* \}$, where we can identify $Q$
with $[n]$ for $n = |Q|$.
We have $\mathcal T_{\mathscr A} = \langle \{ \delta_x \mid x \in \Sigma \} \rangle$.
The \emph{rank} of a map $f : [n] \to [n]$ is the cardinality of its image.
For a given semi-automaton $\mathscr A = (\Sigma, Q, \delta)$, the \emph{rank of a word} $w \in \Sigma^*$
is the rank of $\delta_w$.

\subsubsection{Known Results:} 

%
The next result appears in~\cite{ArCamSt2017}
and despite it was never clearly spelled out by Rystsov himself, it is implicitly present in arguments 
used in~\cite{Rystsov2000}.

\begin{theorem}[Rystsov~\cite{ArCamSt2017,Rystsov2000}]
\label{thm:rystsov} 
 A permutation group $G$ on $[n]$ 
 is primitive if and only if, for any map $f : [n] \to [n]$
 of rank $n - 1$, the transformation monoid
 $\langle G \cup \{f\} \rangle$ contains a constant map.
\end{theorem}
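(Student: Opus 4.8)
The plan is to prove the two implications separately, handling the converse first since it only requires an explicit construction. For the converse, suppose $G$ is \emph{not} primitive. As every group of degree $n \le 2$ is primitive we have $n \ge 3$, and there is a non-trivial $G$-invariant equivalence relation $\sim$ on $[n]$; it has at least one class $C$ with $|C| \ge 2$ and, being non-trivial, at least two classes. Choosing distinct $a, b \in C$, let $f$ be the rank-$(n-1)$ idempotent that maps $b$ to $a$ and fixes every other point. Since the only non-identity action of $f$ stays inside the class $C$, the map $f$ respects $\sim$, and therefore so does every element of $M := \langle G \cup \{f\} \rangle$. Thus each element of $M$ induces a transformation of the quotient $[n]/{\sim}$; as $f$ fixes each class setwise, its induced transformation is the identity, so the induced monoid is exactly the image of $G$ and consists of permutations of the (at least two) classes. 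A constant map would collapse all classes to one and so cannot be induced by any element of $M$; hence $M$ contains no constant map, as required.

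For the forward direction I would show that primitivity of $G$ forces $M = \langle G \cup \{f\}\rangle$ to contain a map of rank $1$. We may assume $n \ge 3$, so $G$ is transitive; write $\{p, q\}$ for the unique pair collapsed by $f$ and $m$ for its unique omitted point. Two rank-reducing moves are available. First, if $W \in M$ has a singleton fibre $\{z\}$, pick $g \in G$ with $g(m) = z$ (transitivity); then $e := gf$ has image $[n] \setminus \{z\}$ and collapses $\{p,q\}$, and $We$ drops $W(z)$ from the image, so $\rank(We) < \rank(W)$. Second, if the image of $W$ contains a pair $\{x, y\}$ that is \emph{collapsible}, meaning $h(x) = h(y)$ for some $h \in M$, then $hW$ merges $x$ and $y$ and again has strictly smaller rank. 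Let $\Delta$ be the graph on $[n]$ whose edges are the collapsible pairs. It is $G$-invariant, and it is non-empty because $\{p,q\}$ is an edge (via $f$); since $fg^{-1}$ collapses $\{g(p), g(q)\}$ it even contains the whole $G$-orbit of $\{p,q\}$. The connected components of a $G$-invariant graph form a $G$-invariant partition, so primitivity makes this partition trivial, and as $\Delta$ has an edge it must be connected.

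Now let $r_0$ be the least rank occurring in $M$ and suppose towards a contradiction that $r_0 \ge 2$. A map of rank $r_0$ admits neither move, which yields strong structural constraints: every minimal kernel has all classes of size $\ge 2$ (otherwise the first move reduces the rank), every minimal image is an independent set of $\Delta$ (otherwise the second move does), and a short composition argument shows each minimal image is a transversal of each minimal kernel. From this data I would construct a non-trivial $G$-invariant equivalence relation, for instance by declaring $x \equiv y$ when $x$ and $y$ lie in exactly the same minimal images; this relation is automatically $G$-invariant because $G$ permutes the minimal images, and it is not the all-relation since a minimal image is a proper non-empty subset. Obtaining a contradiction with primitivity then reduces to showing that $\equiv$ is not the identity, i.e. that the minimal images are the blocks of a genuine non-trivial system.

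This last point is the main obstacle, and it is where the hypothesis that $f$ has defect \emph{exactly one} must be used. Connectivity of $\Delta$ by itself does not force synchronization: an imprimitive group can induce a connected collapse graph whose minimal images are non-singleton blocks, as happens for the dihedral group acting on an even cycle. The argument must therefore exploit that $f$ collapses a single pair, which constrains the minimal ideal of $M$ rigidly enough that the minimal images must partition $[n]$ into non-singleton blocks; establishing this rigidity, and hence the contradiction with primitivity, is the heart of the proof.
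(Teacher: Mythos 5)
The paper does not actually prove Theorem~\ref{thm:rystsov}: it is quoted as a known result of Rystsov (via \cite{ArCamSt2017,Rystsov2000}), and the paper later \emph{uses} it as a black box (e.g.\ in the converse direction of Theorem~\ref{thm:primitive_iff_compl_reach}). So your attempt stands on its own. Your converse direction is correct and complete: the rank-$(n-1)$ idempotent collapsing two points of one block respects the invariant partition and induces the identity on the quotient, so every element of $\langle G\cup\{f\}\rangle$ acts on the ($\ge 2$) blocks as a permutation and no constant map can occur.

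The forward direction, however, contains a genuine gap, which you flag yourself. The preparatory material is sound: the two rank-reducing moves, the $G$-invariance and (by primitivity) connectedness of the collapsibility graph $\Delta$, and the facts that minimal images are independent sets of $\Delta$ and transversals of every minimal kernel. But the decisive step --- showing that minimal rank $r_0\ge 2$ produces a \emph{non-trivial} $G$-invariant equivalence relation --- is precisely what is missing. Your candidate relation ($x\equiv y$ iff $x$ and $y$ lie in exactly the same minimal images) only contradicts primitivity if it is not the identity relation, i.e.\ if two distinct points lie in the same set of minimal images, and you give no argument for this; as you correctly observe, connectivity of $\Delta$ cannot supply it, since an imprimitive group can have a connected collapse graph. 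This missing rigidity argument, where the hypothesis that $f$ has rank exactly $n-1$ (equivalently, that $\ker f$ is non-uniform) must be exploited, is the entire content of the theorem beyond the easy direction, so the proof cannot be accepted as is. One concrete way to close it with the machinery already in this paper: run your invariance-plus-primitivity argument on the strongly connected components of the Bondar--Volkov graph $\Gamma_1(\langle G\cup\{f\}\rangle)$ instead of on $\Delta$. Transitivity gives every vertex an outgoing edge, hence a non-loop directed cycle, so the SCC partition is not the discrete one; primitivity forces it to be the all-partition, so $\Gamma_1$ is strongly connected, and Theorem~\ref{thm:Gamma_1_scc} then yields complete reachability and in particular a constant map. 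This is exactly how the paper proves the stronger Theorem~\ref{thm:primitive_iff_compl_reach}, at the cost of importing Theorem~\ref{thm:Gamma_1_scc} as a black box.
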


In~\cite{DBLP:conf/dcfs/BondarV16} a sufficient criterion for complete
reachability was given. It is based on the following graph construction associated to
a semi-automaton.

\begin{definition}[Bondar \& Volkov \cite{DBLP:conf/dcfs/BondarV16}]
\label{def:Gamma_1}
 Let $\mathscr A = (\Sigma, Q, \delta)$ be a semi-automaton.
 Then, we define the graph $\Gamma_1(\mathscr A) = (Q, E)$
 with vertex set $Q$ and edge set
 $
  E = \{ (p, q) \mid \exists w \in \Sigma^* : p \notin \delta(Q, w), |\delta^{-1}(q, w)| = 2, \mbox{ $w$ has rank $|Q| - 1$} \}.
 $
 For transformation monoids $M \le \mathcal T_n$, a similar definition $\Gamma_1(M)$ applies.
\end{definition}

The construction was extended in~\cite{DBLP:conf/dlt/BondarV18}
to give a sufficient and necessary criterion. The graph $\Gamma_1(\mathscr A)$ could be computed in polynomial time~\cite{DBLP:journals/jalc/GonzeJ19}.

\begin{theorem}[Bondar \& Volkov \cite{DBLP:conf/dcfs/BondarV16}]
\label{thm:Gamma_1_scc}
 Let $\mathscr A = (\Sigma, Q, \delta)$.  If $\Gamma_1(\mathscr A) = (Q, E)$
 is strongly connected, then $\mathscr A$
 is completely reachable.
\end{theorem}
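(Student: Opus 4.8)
The plan is to prove complete reachability by downward induction on the size of the target subset, showing that every nonempty $S \subseteq Q$ is reachable in the power automaton $\mathcal P_{\mathscr A}$. The base case is $S = Q$, reached by the empty word. For the inductive step I would assume that every subset of size $k+1$ is reachable, where $1 \le k \le |Q| - 1$, and deduce that an arbitrary subset $S$ of size $k$ is reachable.

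The heart of the argument is a local lifting move attached to each edge of $\Gamma_1(\mathscr A)$. Suppose $(p,q) \in E$ with witness word $w$, so $w$ has rank $|Q| - 1$, the state $p$ lies outside $\delta(Q, w)$, and $q$ is the unique state with $|\delta^{-1}(q, w)| = 2$, say $\delta^{-1}(q, w) = \{a, b\}$. Because $w$ has rank $|Q|-1$, the only collision is this pair $\{a,b\}$ and the only omitted state is $p$; every state of $\delta(Q,w) \setminus \{q\}$ has a unique $w$-preimage. Now if $S$ is any set with $q \in S$ and $p \notin S$, then $S \subseteq \delta(Q,w)$, so I can form the preimage set $T := \{a, b\} \cup \{\, s' : s \in S \setminus \{q\},\ \{s'\} = \delta^{-1}(s, w) \,\}$. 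I would check that $|T| = (k-1) + 2 = k+1$ and that $\delta(T, w) = (S \setminus \{q\}) \cup \{q\} = S$. Thus $S$ is the image under $w$ of a set of size $k+1$, which is reachable by the induction hypothesis, and therefore $S$ itself is reachable.

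It remains to produce, for every subset $S$ of size $k$, an edge $(p,q) \in E$ satisfying $p \notin S$ and $q \in S$ --- this is precisely where strong connectivity of $\Gamma_1(\mathscr A)$ is used. Since $1 \le k \le |Q|-1$, the set $S$ is a proper nonempty subset of the vertex set, so I can pick $u \notin S$ and $v \in S$ and take a directed path from $u$ to $v$ guaranteed by strong connectivity; the first edge of this path leaving the complement of $S$ is an edge $(p,q)$ with $p \notin S$ and $q \in S$. Feeding this edge into the lifting move above completes the inductive step and hence the proof.

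I expect the main obstacle to lie entirely in the bookkeeping of the lifting move: one must exploit the exact rank-$(|Q|-1)$ structure to be sure that the merged pair is $\{a,b\}$ mapping onto $q$, that $p$ is the single missing image, and that the chosen edge condition $p \notin S$, $q \in S$ is exactly what the cut/boundary property of a strongly connected digraph supplies. Once these are aligned, the preimage set $T$ has the claimed size and image, and the induction runs without further difficulty.
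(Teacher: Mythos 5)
Your proof is correct and is essentially the original argument of Bondar and Volkov: downward induction on $|S|$, using strong connectivity to find an edge $(p,q)$ entering $S$ from its complement, and then lifting $S$ along the witness word $w$ to the full preimage $T = \delta^{-1}(S,w)$ of size $|S|+1$. The paper itself states this theorem as a cited result without reproving it, so there is no divergence to report.
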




\section{General Results on the State Complexity of $\Syn(\mathscr A)$}
\label{sec:gen_results}

The first result of this section will be needed later when we investige the properties of the sync-maximal
groups introduced in Section~\ref{sec:sync_max}.
But it could also be used for a polynomial time decision
procedure to decide if, for a given completely reachable automaton,
the set of synchronizing words has maximal state complexity.

\begin{lemmarep} 
\label{lem:compl_reach_implies_max_Syn_2sets}
 Let $\mathscr A$ be completely reachable
 with $n$ states. Then, $\stc(Syn(\mathscr A)) = 2^n - n$
 if and only if all $2$-sets of states are pairwise distinguishable in $\mathcal P_{\mathscr A}$.
\end{lemmarep} 
\begin{proof}
 In $\mathcal P_{\mathscr A}$, for the singleton sets 
 we have, for each $w \in \Sigma^*$, that $|\delta(\{q\}, w)| = 1$.
 Hence, no two singleton sets could be distinguished in $\mathcal P_{\mathscr A}$
 as the final state set are precisely the singleton subsets of the states.
 Two states that could not be distinguished are also called equivalent, and
 could be merged to give an automaton accepting the same language~\cite{HopUll79}.
 Looking at the definition of $\mathcal P_{\mathscr A}$, we find, for completely reachable $\mathscr A$,
 that $\stc(\Syn(\mathscr A)) = 2^n - n$ if and only if all non-empty non-singleton sets, i.e.,
 those with at least two distinct elements, are distinguishable in $\mathcal P_{\mathscr A}$.
 We will show that the latter condition is equivalent to the fact that we
 could distinguish all $2$-sets of states.

 If $\stc(\Syn(\mathscr A)) = 2^n - n$, then in particular all $2$-sets
 are distinguishable. Conversely, assume all $2$-subsets of $Q$
 are distinguishable. We prove inductively
 that for $2 \le k \le n$ all distinct $k$-subsets of states are distinguishable,
 where $k$ is our induction parameter. Our assumption gives the base case $k = 2$.
 Suppose the statement is true for $k \ge 2$.
 Let $A, B$ be two non-empty non-singleton sets with $\max\{|A|, |B|\} \le k + 1$.
 We can assume $\max\{|A|, |B|\} = k + 1$, for otherwise the induction hypothesis
 gives the claim.
 We distinguish two cases for the cardinalities of $A$ and $B$.
 
 \begin{enumerate}
 \item[(i)] $|B| < |A| = k + 1$.
  
   \medskip 
   
   Then $|A| \ge 3$. Choose two distinct $2$-sets
   $\{q_1, q_2\}, \{q_3, q_4\} \subseteq A$.
   By hypothesis, the $2$-sets are distinguishable, so 
   we find a word $w \in \Sigma$
   such that one is mapped to a singleton, but not the other.
   Then $|\delta(A, w)| < |A|$. And as $|\delta(\{q_1, q_2, q_3, q_4\}, w)| \ge 2$,
   the set $\delta(A, w)$ is not a singleton set.
   If $\delta(B, w)$ is a singleton, then $w$ distinguishes $A$ and $B$.
   Otherwise $2 \le |\delta(B, w)| \le |B| \le k$
   and we can apply the induction hypothesis to $\delta(A, w)$ and $\delta(B,w)$.
   So, if $u$ distinguishes these two sets, then $wu$
   distinguishes $A$ and $B$.
   
   \medskip 
   
 \item[(ii)] $|B| = |A| = k + 1$.
 
  \medskip 
   
   By the same argument as in (i), we find $w \in \Sigma^*$
   such that $2 \le |\delta(A, w)| \le k$.
   If $\delta(B,w)$ is a singleton, then both sets are distinguished.
   Otherwise, if $|\delta(B,w)| = k + 1$, we continue with
   $\delta(A,w)$ and $\delta(B,w)$ as in case (i).
   If $2 \le |\delta(B,w)| \le k$, 
   then we apply the induction hypothesis.
 \end{enumerate}
 Working our way up to $k = |Q|$, we see that we can distinguish all non-empty non-singleton
 subsets of $Q$ in $\mathcal P_{\mathscr A}$.
 As $\mathscr A$ is assumed to be completely reachable, the minimal automaton
 has $2^n - n$ states, i.e., $\stc(\Syn(\mathscr A)) = 2^n - n$.~$\qed$
\end{proof}

Hence, we only need to check for all pair states $\{ p, q \}$
with $p \ne q$ in the power automaton if they are all distinguishable to each other.
This could be done in polynomial time.

\begin{corollaryrep}
\label{cor:poly_alg_max_sync}
 Let $\mathscr A = (\Sigma, Q, \delta)$ be a completely reachable semi-automaton
 with $n$ states. Then, we can decide in polynomial time
 if $\stc(\Syn(\mathscr A)) = 2^n - n$.
\end{corollaryrep}
\begin{proof}
 
 

 Suppose $\mathscr A = (\Sigma, Q, \delta)$ is completely reachable.
 It is well-known that automata minimization
 could be performed
 in polynomial time~\cite{Hopcroft1971AnNL,HopUll79}.
 Recall that an automaton is minimal if and only if all states are reachable
 and pairwise distinguishable.
 The usual algorithms~\cite{Hopcroft1971AnNL,HopUll79} work in two phases: 1) only retain those states
 reachable from the start state and then 2) merge states that are not distinguishable.
 We will only use the second merging step of the minimization procedure.
 Construct the automaton $\mathcal P_2 = (\Sigma, Q_2, \delta, s_0, F)$ 
 with $F = \{ \{q\} \mid q \in Q \}$, $s_0 \in Q_2$ arbitrary and 
 $Q_2 = \{ \{q, q'\} \mid q,q'\in Q\}$, i.e., we only take the $2$-subsets and the singleton sets of $Q$ as states. This automaton has $\binom{|Q|}{2} + |Q|$
 states and could be constructed in polynomial time.
 Note that, as $|\delta(S, w)| \le |S|$ for any $S \subseteq Q$,
 the automaton is complete and we have $\delta(Q_2, x) \subseteq Q_2$ for $x \in \Sigma$.
 Then, run the second step of the minimization algorithm to find out if any
 $2$-sets\footnote{Note that all sets in $F$ for this special automaton
 are not distinguishable, and so are definitely merged by the algorithms. Alternatively, we could
 merge them beforehand, pass this input to the algorithm and see if any states at all 
 are merged.} could be merged, if this is not the case
 then they are all distinguishable. And, as $\delta(Q_2, x) \subseteq Q_2$
 for any $x \in \Sigma$,
 a pair state $\{p,q\}$ is distinguishable in $\mathcal P_2$
 if and only if it is distinguishable in $\mathcal P_{\mathscr A}$. $\qed$
 
 \medskip
 
 Alternatively, I sketch a direct argument without citing another algorithm.
 Note that, in general, for an automaton $\mathscr A = (\Sigma, Q, \delta, s_0, F)$,
 two states $\{ p, q \}$ are distinguishable~\cite{HopUll79} if and only if
 \begin{enumerate}
 \item $|\{p, q \}\cap F| = 1$, or
 \item there exists $x \in \Sigma$ such that $\{ \delta(p, x), \delta(q, x) \}$
  are distinguishable.
 \end{enumerate}
 This directly lends itself to a naive algorithm, running to the following scheme
 applied to $\mathcal P_2$ constructed above.
 First, fill a table with $n^4 + n$ (or $\binom{\binom{n}{2}}{2}$ instead of $n^4$ for the $2$-sets of $2$-sets) 
 entries, with an entry for each $2$-set of two states from $\mathcal P_2$ (note these might be $2$-sets)
 which saves if both states are distinguishable from each
 other.
 Initially, mark all $2$-sets containing precisely one final state.
 Then, run multiple passes, and in each pass, test
 for each unmarked $2$-set $\{\{p_1,p_2\},\{q_1,q_2\}\}$ and letter $x \in \Sigma$,
 if $\{ \{\delta(p_1,x),\delta(p_1,x)\}, \{\delta(q_1,x),\delta(q_2,x)\}\}$
 is marked, and if so mark $\{\{p_1,p_2\},\{q_1,q_2\}\}$.
 In each pass, we have to perform at most $O((n^4+n)|\Sigma|)$ many such tests.
 Now, if in some such pass no pair is marked, we are finished, as inductively
 any additional pass will not add any new markings.
 The non-marked state pairs are precisely the non-distinguishable ones.
 Also, in each pass, if another pass should be performed, at least
 one pair must be marked. Hence, we have to run at most $n^4+n$ many passes
 before it has to stabilize or no other pairs, i.e., all are marked, are left for marking.
 Then, when the algorithm has finished, we only have to check if all
 entries representing a $2$-set of two $2$-sets (!) are marked do conclude
 that all $2$-sets in $\mathcal P_{\mathcal A}$ are distinguishable. Obviously, this scheme
 leaves much room for improvement.

\end{proof}

Next, we state a simple observation. 

\begin{lemmarep} 
\label{lem:SCC_Syn_max_implies_compl_reach}
 Let $\mathscr A = (\Sigma, Q, \delta)$
 be a strongly connected semi-automaton. 
 If $\Syn(\mathscr A)$ has maximal state complexity, then
 $\mathscr A$ is completely reachable.
\end{lemmarep}
\begin{proof}
 If the set of synchronizing words has maximal state complexity, then
 by the definition of the automaton $\mathcal P_{\mathscr A}$
 every subset $S \subseteq Q$ with $|S| > 1$ is reachable
 and at least one singleton set is reachable. But as $\mathscr A$
 is assumed to be strongly connected, then every other singleton set
 is reachable from the one supposed to be reachable. 
 So, taken together, every non-empty subset of states is reachable, i.e.,
 $\mathscr A$ is completely reachable. \qed
\end{proof}

\section{Permutation Groups and State Complexity of $\Syn(\mathscr A)$}
\label{sec:grp_case}

In Section~\ref{sec:prim_k_hom}, we state characterizations
of the $k$-homogeneous and of primitive permutation groups
by inspection of the resulting transformation monoid when non-permutations
are added. This is in the spirit of ``detecting properties of (permutation groups)  with
functions'' as presented in~\cite{ArCamSt2017}. In Section~\ref{sec:sync_max},
we introduce the sync-max permutation groups, defined by stipulating
that the resulting semi-automaton\todo{Zusammenhang Cayley-graph bzw wenn G
regulär operiert? Non-Permutation ist ja nicht teil von dem Cayley-Graph, also nicht von Monoid.}, 
given by the generators of the permutation group and a non-permutation,
is synchronizing 
and its set of synchronizing words
has maximal state complexity.
This definition is independent of the choice of generators
for the group and we will show that the resulting class
of groups is properly contained between the $2$-homogeneous
and the primitive groups.

\subsection{Primitive and $k$-Homogeneous Permutation Groups}
\label{sec:prim_k_hom}

%
%

%
%

First, as a warm-up, we state two equivalent conditions to $k$-homogeneity.
They are not hard to see, 
but seem to be unnoticed or at least never stated in the literature before.

\begin{lemmarep}
\label{lem:k_hom_char}
 Let $G \le \mathcal S_n$. The following conditions are equivalent:
 \begin{enumerate}
 \item \label{cond:hom} $G$ is $k$-homogeneous,
 \item \label{cond:any_f} for any map of rank $n - k$, in $\langle G \cup \{f\} \rangle$
  every subset of size $n - k$ is reachable,
 \item \label{cond:some_f} there exists a map of rank $n - k$ such that in $\langle G \cup \{ f \} \rangle$
  every subset of size $n - k$ is reachable.
 \end{enumerate}
\end{lemmarep}
\begin{proof}
 \begin{enumerate}
 \item First, we show that Condition~\ref{cond:hom}
  implies Condition~\ref{cond:any_f}.
  Suppose $G$ is $k$-homogeneous and let $f \colon [n] \to [n]$
  be of rank $n - k$. Hence, as we can reach at least one
  subset of size $n - k$ and as a group is $k$-homogeneous
  if and only if it is $(n-k)$-homogeneous\todo{cite huppert, oder easy to ssee, oder in preliminiaries erwähnen}, we find that we can reach every subset of size $n - k$.
  Moreover, every such subset $A\subset [n]$ could be written as $A = g(f([n]))$
  for some $g \in G$.
  
 \item That Condition~\ref{cond:any_f} implies Condition~\ref{cond:some_f}
  is obvious.
  
 \item Lastly, we show that Condition~\ref{cond:some_f}
  implies Condition~\ref{cond:hom}.
 Let $f \colon [n] \to [n]$ have rank $n - k$ such that in $\langle G \cup \{f\} \rangle$
 every subset of size $n - k$ is reachable.
 Suppose $A \subseteq [n]$ is
 a subset with $k$ elements, then
 we have some mapping $g$ in the transformation monoid generated
 by $G$ and $f$ such that $g([n]) = [n] \setminus A$.
 Write\footnote{We use the convention that for $f \colon A \to B, g \colon C \to B$,
 the composition $fg \colon C \to A$ is the function $(fg)(x) = f(g(x))$
 for $x \in C$.} 
 $g = u_1 f u_2 f \cdots u_m f$
 with $u_i \in G$ and $m \ge 1$. 
 We will show that the element $u_1$ already has to map $[n] \setminus f([n])$
 onto $A$.
 Now $|g([n])| = n - k$ 
 implies $|(fu_2f\cdots u_mf)([n])| = n - k$.
 But for any $B \subseteq [n]$, if $|f(B)| = n - k$,
 then $f(B) = f([n])$.
 Hence we must have $(fu_2f\cdots u_mf)([n]) = f([n])$, i.e.,
 we must hit the full image of $f$.
 So, 
 $$
  [n]\setminus A = g([n]) = u_1( (fu_2f\cdots u_mf)([n]) ) = u_1(f([n])).
 $$
 As $u_1$ is a permutation, $A = u_1([n] \setminus f([n]))$.
 So, for any $k$-subset $A$ of $[n]$ we find an element in $G$
 mapping $[n] \setminus f([n])$ onto $A$, i.e., $G$ is
 transitive on the $k$-sets or $k$-homogeneous.
 \end{enumerate}
 So, all conditions are equivalent. \qed
\end{proof}

\begin{remark}
 As a group is $k$-homogeneous if and only if it is $(n-k)$-homogeneous,
 we could also add that $k$-homogeneity is equivalent to the property
 that adding any (or some) function whose image contains precisely $k$
 elements gives a transformation monoid in which every $k$-subset
 is reachable.
\end{remark}

\begin{toappendix}
For reference, we state the following.\todo{liviing stone wagner auch staten, weil später
gebraucht, in preliminaries?}

\begin{lemma}\label{lem:2-hom_implies_transitive}
 If $G$ is $2$-homogeneous with $n > 2$, then $G$ is transitive.
\end{lemma}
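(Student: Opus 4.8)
The plan is to show that a $2$-homogeneous group $G \le \mathcal S_n$ with $n > 2$ is transitive, i.e., that every point $i \in [n]$ lies in a single orbit, so that for any $p, q \in [n]$ there is some $g \in G$ with $g(p) = q$. First I would unpack what $2$-homogeneity buys us: for any two $2$-subsets $\{a,b\}$ and $\{c,d\}$ of $[n]$ there exists $g \in G$ with $g(\{a,b\}) = \{c,d\}$ (as unordered sets). The goal is to upgrade this action on unordered pairs to an action that is transitive on single points.

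The key step is to fix two distinct points $p, q \in [n]$ and produce a $g \in G$ with $g(p) = q$. Since $n > 2$, I can pick a third point $r \in [n] \setminus \{p, q\}$. Now consider the two $2$-sets $\{p, r\}$ and $\{q, r\}$. By $2$-homogeneity there is some $g \in G$ with $g(\{p, r\}) = \{q, r\}$. The issue is that $g$ might map $p \mapsto r$ and $r \mapsto q$ rather than $p \mapsto q$. To force the image of $p$ to be $q$, I would instead compare $\{p, r\}$ with a target pair chosen so that only one outcome is possible; the cleanest route is to find, for the fixed $p$, that the orbit of $p$ has size at least $2$ and then argue it must be all of $[n]$. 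A slicker formulation: I claim the orbit $O$ of $p$ equals $[n]$. Suppose not, so there is $s \notin O$. Pick any $t \in O$ with $t \ne p$ (such $t$ exists because $2$-homogeneity applied to pairs containing $p$ moves $p$). Then map the pair $\{p, s\}$ to a pair contained in $O$ (possible since there are at least two points in $O$ and we can choose a target $2$-set inside $O$), which forces $s \in O$ as well, a contradiction — after checking the orbit-size bookkeeping carefully.

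Concretely, I would argue as follows. Let $O$ be the orbit of $p$. For any $x \ne p$, apply $2$-homogeneity to send $\{p, x\}$ to some pair $\{p, y\}$ with $y \ne p$; the element $g$ realizing this either fixes $p$ or sends $p$ to $y$, and by choosing the target pairs appropriately across enough points one shows $|O| \ge 2$, and then that $O$ absorbs every remaining point by the pair-transitivity, giving $O = [n]$. Since $p$ was arbitrary and every point shares the orbit of $p$, the group is transitive.

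The main obstacle is the gap between transitivity on \emph{unordered} $2$-sets and transitivity on \emph{points}: a single witness $g$ with $g(\{a,b\}) = \{c,d\}$ does not tell us which point goes where. Overcoming this requires the auxiliary third point (available precisely because $n > 2$, which is exactly the hypothesis) and a careful case analysis of the two possible images, or equivalently a clean orbit-counting argument. I would expect the bookkeeping in the orbit argument — ensuring we never secretly need $n \le 2$ — to be the delicate part, whereas the underlying idea is standard and short.
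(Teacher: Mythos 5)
Your orbit argument is, in essence, the direct argument the paper itself sketches as an alternative; the paper's primary route is shorter and indirect: a $2$-homogeneous group preserves no non-trivial equivalence relation (any relation holding for one pair of distinct points can be transported to every pair), hence is primitive, and every primitive group of degree $n>2$ is transitive. Your direct route is perfectly viable and more self-contained, but one step needs repair. To start the orbit argument you must first establish that the orbit $O$ of $p$ has at least two elements, and the device you propose for this --- mapping $\{p,x\}$ to a pair of the form $\{p,y\}$ --- cannot work, as you yourself observe: the witness $g$ may fix $p$. The correct choice of target is a $2$-set \emph{avoiding} $p$, which exists precisely because $n>2$; then $g(p)\ne p$, so $|O|\ge 2$. (Equivalently: if $p$ were fixed by all of $G$, no pair containing $p$ could reach a pair not containing $p$, contradicting $2$-homogeneity.) With that in hand your absorption step is sound: if $s\notin O$, map $\{p,s\}$ onto a $2$-subset of $O$; since $g(s)\in O$ and $s$ lies in the same orbit as $g(s)$, we get $s\in O$, a contradiction. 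The paper's sketch runs the same collision in the opposite direction, mapping two points of one orbit onto a pair straddling two orbits. So your idea is correct and matches one of the paper's two arguments, but ``choosing the target pairs appropriately'' must be made concrete as above --- and that is exactly the point where the hypothesis $n>2$ is consumed.
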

\begin{proof}
 Every $2$-homogeneous permutation group is primitive,
 as if we have an equivalence relation preserved by $G$, if any two distinct
 pairs are in relation to each other, we can map them to any other two pairs, i.e.,
 all elements are in relation to each other. 
 Hence, for $n > 2$, as every primitive group is transitive, $G$
 is transitive.
 This is somehow the standard argument, or arguing\todo{erklärungen auslagern.}
 directly with orbits: it is easy to see that no orbit has size one, as these
 are precisely the fixed points of the group, and so if we have two orbits
 we can take two elements from one orbit and map them onto a $2$-sets
 with one element from each orbit, which should not be possible
 as orbits form a system of blocks~\cite{cameron_1999}.
 \qed
\end{proof}
\end{toappendix}

An automaton is synchronizing precisely if its transformation monoid contains a constant map. 
Hence, the next result is a strengthening of Theorem~\ref{thm:rystsov}.

\begin{theorem}
\todo{Completely Reachable für Transformationsmonoid gar nicht definiert!}
\label{thm:primitive_iff_compl_reach}
 A finite permutation group $G \le \mathcal S_n$
 with $n \ge 3$ is primitive if and only if, for any transformation $f \colon [n] \to [n]$
 of rank $n-1$, in the transformation semigroup $\langle G \cup \{ f \} \rangle$
 we find, for each non-empty $S \subseteq [n]$, an element $g \in \langle G \cup \{ f \} \rangle$
 such that $g([n]) = S$. 
\end{theorem}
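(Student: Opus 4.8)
The plan is to prove the two implications separately, keeping the heavier machinery for the direction ``primitive $\Rightarrow$ reachability''. The easy direction follows directly from Theorem~\ref{thm:rystsov}. Assume the stated reachability property holds, and let $f$ be any transformation of rank $n-1$. Specializing the property to a singleton $S = \{q\}$ produces $g \in \langle G \cup \{f\}\rangle$ with $g([n]) = \{q\}$, i.e.\ a constant map. Thus $\langle G \cup \{f\}\rangle$ contains a constant map for every rank-$(n-1)$ transformation $f$, and Theorem~\ref{thm:rystsov} immediately gives that $G$ is primitive.

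For the converse, I would route the argument through the graph $\Gamma_1(M)$ of Definition~\ref{def:Gamma_1} applied to the transformation monoid $M = \langle G \cup \{f\}\rangle$, and then invoke the transformation-monoid analogue of Theorem~\ref{thm:Gamma_1_scc}: if $\Gamma_1(M)$ is strongly connected then every nonempty subset of $[n]$ is an image $g([n])$ of some $g \in M$, which is exactly the desired conclusion (the reachable subsets of the associated semi-automaton are precisely the sets $g([n])$, $g \in M$). So it suffices to show $\Gamma_1(M)$ is strongly connected. Since $f$ has rank $n-1$, it collapses a unique pair $\{a,b\}$ onto a single point $c := f(a) = f(b)$ and omits a unique point $r$ from its image, with $r \ne c$ because $c$ lies in the image of $f$ and $r$ does not. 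For each $h \in G$ the element $hf \in M$ (first $f$, then $h$) again has rank $n-1$, satisfies $hf([n]) = [n] \setminus \{h(r)\}$, and collapses exactly $\{a,b\}$ onto $h(c)$; hence $(h(r), h(c))$ is an edge of $\Gamma_1(M)$. Letting $h$ range over $G$, we see that $\Gamma_1(M)$ contains the entire orbital digraph $\Delta$ of the ordered pair $(r,c)$, and $\Delta$ spans all of $[n]$ because, for $n \ge 3$, a primitive group is transitive.

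It then remains to argue that $\Delta$ is strongly connected, and this is where primitivity enters. The edge set of $\Delta$ is permuted by $G$, so the partition of $[n]$ into weakly connected components of $\Delta$ is an equivalence relation preserved by $G$. By primitivity this relation is trivial; as $r \ne c$ are joined by an edge of $\Delta$, it cannot be the diagonal, so it must be $[n] \times [n]$, i.e.\ $\Delta$ is weakly connected. Finally, every vertex of $\Delta$ has equal in- and out-degree (both are constant over $[n]$ by transitivity, and they agree since they count the same edge set), so a weakly connected $\Delta$ carries an Eulerian circuit and is therefore strongly connected; equivalently, one may simply cite Higman's theorem on orbital graphs of primitive groups~\cite{cameron_1999}. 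Consequently $\Gamma_1(M)$ is strongly connected, and the cited criterion finishes the proof.

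I expect the main obstacle to be this clean reduction to orbital-graph connectivity: one must verify that the $G$-translates of $f$ contribute precisely the orbital digraph of $(r,c)$ inside $\Gamma_1(M)$, and then supply the standard but not wholly trivial fact that a weakly connected orbital digraph of a transitive group is already strongly connected. Everything else is routine bookkeeping with ranks, images, and the composition convention.
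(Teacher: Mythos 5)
Your proof is correct and follows the same skeleton as the paper's: Rystsov's theorem (Theorem~\ref{thm:rystsov}) handles the direction from reachability to primitivity via the singleton case, and the converse is routed through strong connectivity of $\Gamma_1(M)$ and the Bondar--Volkov criterion (Theorem~\ref{thm:Gamma_1_scc}). The one place you diverge is the connectivity step. The paper argues directly on all of $\Gamma_1(M)$: since every vertex has an outgoing edge (by transitivity), the finite graph contains a directed cycle that is not a loop, and the partition into \emph{strongly} connected components is a $G$-invariant, non-trivial equivalence relation unless $\Gamma_1(M)$ is already strongly connected --- so primitivity finishes immediately, with no need to pass from weak to strong connectivity. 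You instead isolate the orbital digraph $\Delta$ of $(r,c)$ inside $\Gamma_1(M)$, use primitivity to get \emph{weak} connectivity of $\Delta$, and then upgrade to strong connectivity via the balanced in-/out-degree (Eulerian) argument, i.e.\ essentially Higman's theorem. This is exactly the alternative route the paper itself flags in the remark following the theorem; it is slightly heavier (you need the degree-balance lemma, which the SCC argument avoids) but more informative, since it pins down explicitly which edges of $\Gamma_1(M)$ the translates $hf$ contribute. Both your identification of the edge $(h(r),h(c))$ and your use of the composition convention are consistent with Definition~\ref{def:Gamma_1}, so I see no gap.
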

\begin{proof}
 Suppose $G$ is a permutation group on $[n]$ that is not primitive. 
 Then, $n \ge 3$ and by Theorem~\ref{thm:rystsov}, for some $f \colon [n] \to [n]$
 of rank $n - 1$ the group $\langle G \cup \{f\}\rangle$
 does not contain a constant map. 
 So, no element in $\langle G \cup \{f\}\rangle$ can map $[n]$ to a singleton subset of $[n]$.

 Conversely, let $G$ be a primitive permutation group on $\Omega$. Take any transformation
 monoid $M$ generated by $G$ and a transformation of rank $n - 1$.
 If $(p,q)$ is an edge of the graph $\Gamma_1(M)$, then for each $g \in G$, the 
 pair $(g(p), g(q))$ also constitutes an edge of $\Gamma_1(M)$. 
 Since $G$ is transitive, we see that every element in $[n]$
 has an outgoing edge in $\Gamma_1(M)$. This clearly implies that $\Gamma_1(M)$
 has a directed cycle, and by the definition 
 of $\Gamma_1(M)$ this cycle is
 not a loop.
 Assume $\Gamma_1(M)$ is not strongly connected.
 The partition of $\Gamma_1(M)$ into strongly connected components induces a partition
 $\pi$ of $[n]$ which is nontrivial, since $\Gamma_1(M)$ has a directed cycle which
 is not a loop. As $G$ preserves the edges of $\Gamma_1(M)$, if 
 we have a path between any two vertices, we also have a path between their images.
 Hence $G$ respects $\pi$, which is not possible as $G$ is primitive by assumption.
 So $\Gamma_1(M)$ must be strongly connected. Then
 $M$ is completely reachable by Theorem~\ref{thm:Gamma_1_scc}.~$\qed$
\end{proof}

\begin{remark}
 For finite permutation groups, Theorem~\ref{thm:rystsov}, Theorem~\ref{thm:Gamma_1_scc}, 
 Higman's orbital graph characterization
 of primitivity and the fact that, for finite orbital graphs, connectedness
 implies strongly connectedness (please see~\cite{cameron_1999}
 for these notions) 
 could be used to give another proof of Theorem~\ref{thm:primitive_iff_compl_reach}.
\end{remark}

\begin{remark}
\label{rem:idempotent}
 Theorem~\ref{thm:primitive_iff_compl_reach} could actually be strengthened\footnote{I am thankful to an anonymous referee for this observation.} by assuming $f : [n]\to [n]$ to be idempotent. 
 For let $G \le S_n$ be primitive and suppose the statement is valid for idempotent transformations only.
 Then, let $f : [n] \to [n]$ be any transformation. 
 Assume $f(a) = f(b)$ with $a,b \in [n]$. By Lemma~\ref{lem:k_hom_char}, $G$ is transitive.
 Hence, there exists $g \in G$ such that $a \notin g(f([n]))$
 and $gf$ permutes $[n] \setminus \{a\}$. Then, some power
 of $gf$ acts as the identity on $[n] \setminus \{a\}$, i.e., is idempotent.
\end{remark}

Because our main motivation comes from the theory of automata, let us state a
variant of Theorem~\ref{thm:primitive_iff_compl_reach} formulated in terms of automata.

\begin{corollary} 
\label{cor:primitive_iff_compl_reach}
 Let $n \ge 0$. Suppose $G = \langle g_1, \ldots, g_k \rangle \le \mathcal S_n$.
 Then $G$ is primitive if and only if for every transformation $f\colon [n] \to [n]$
 of rank $n - 1$, the semi-automaton $\mathscr A = (\Sigma, Q, \delta)$
 with\footnote{The elements of $\Sigma$ are meant to be abstract symbols.}
 $\Sigma = \{g_1, \ldots, g_k, f \}$, $Q = [n]$ and $\delta(i, g) = g(i)$ for $i \in Q$
 and  $g \in \Sigma$ is completely reachable.
\end{corollary}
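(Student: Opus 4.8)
The plan is to recognize this corollary as a direct automata-theoretic restatement of Theorem~\ref{thm:primitive_iff_compl_reach}, so that the bulk of the work is matching definitions rather than carrying out new combinatorics. The substantive equivalence will be inherited wholesale; the only genuine care concerns the small degrees $n \le 2$, which Theorem~\ref{thm:primitive_iff_compl_reach} excludes.

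First I would set up the dictionary between the semi-automaton $\mathscr A$ and the transformation semigroup occurring in Theorem~\ref{thm:primitive_iff_compl_reach}. Identifying $Q$ with $[n]$, the associated transformation monoid is $\mathcal T_{\mathscr A} = \langle \delta_{g_1}, \ldots, \delta_{g_k}, \delta_f \rangle$; since each $\delta_{g_i}$ is just $g_i$ acting on $[n]$ and $\delta_f = f$, this monoid coincides with $\langle G \cup \{f\} \rangle$. In particular it depends only on $G$ and on $f$, not on the chosen generating set, so the statement is well posed. By the definition of complete reachability, $\mathscr A$ is completely reachable if and only if for every non-empty $S \subseteq Q$ there is a word $w$ with $\delta(Q, w) = S$; writing $\delta(Q, w) = \delta_w([n])$ and letting $w$ range over $\Sigma^*$, this is exactly the condition that every non-empty $S \subseteq [n]$ equals $g([n])$ for some $g \in \langle G \cup \{f\} \rangle$.

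For $n \ge 3$ the corollary then follows at once: Theorem~\ref{thm:primitive_iff_compl_reach} asserts that $G$ is primitive if and only if, for every transformation $f$ of rank $n-1$, every non-empty subset arises as an image $g([n])$ with $g \in \langle G \cup \{f\} \rangle$, which by the previous paragraph is precisely the assertion that $\mathscr A$ is completely reachable for every such $f$. There is nothing more to prove in this range.

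The real care is needed for $n \le 2$, which Theorem~\ref{thm:primitive_iff_compl_reach} does not cover, and here I expect the main (minor) obstacle to lie not in any hard step but in pinning down the boundary conventions. For $n \in \{0,1\}$ there is no total transformation of rank $n-1$ (this would force an empty image on a domain that is either empty or the singleton $\{0\}$), so the universally quantified right-hand side holds vacuously, matching the convention that every group of degree $\le 2$ is primitive. For $n = 2$ the rank $n - 1 = 1$ maps are the two constants; I would observe that adjoining a constant to $\mathcal S_2$ already produces both constants — compose the constant with the transposition — together with the identity, so that $\{0\}$, $\{1\}$ and $\{0,1\}$ are all reachable and $\mathscr A$ is completely reachable. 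For the trivial degree-$2$ group, however, adjoining $f$ equal to one constant reaches only that singleton and $\{0,1\}$, so $\mathscr A$ fails to be completely reachable; hence the $n = 2$ case forces the convention in which the trivial degree-$2$ group is \emph{not} counted as primitive, exactly as flagged in the preliminaries.
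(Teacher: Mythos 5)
Your proposal is correct and takes the same route the paper intends: Corollary~\ref{cor:primitive_iff_compl_reach} is stated without any separate proof, precisely because the identification of $\mathcal T_{\mathscr A}$ with $\langle G \cup \{f\}\rangle$ makes it a verbatim translation of Theorem~\ref{thm:primitive_iff_compl_reach} into automata language, and your dictionary paragraph plus the observation that the monoid is independent of the chosen generators is exactly that translation.

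One caveat on your boundary analysis for $n = 2$: your mathematics is right (for the trivial subgroup of $\mathcal S_2$, adjoining a single constant reaches only one of the two singletons, so $\mathscr A$ is not completely reachable), but you have the paper's convention backwards. The preliminaries explicitly adopt the convention that \emph{all} permutation groups of degree $n \le 2$ are primitive, merely remarking that \emph{some other authors} exclude the trivial degree-$2$ group. Under the paper's own convention your computation therefore exhibits the trivial degree-$2$ group as a counterexample to the corollary as literally stated with $n \ge 0$, rather than confirming it; what you have actually shown is that the corollary needs either the hypothesis $n \ge 3$ or the alternative convention you adopted. This is a defect of the statement's edge case, not of your argument for $n \ge 3$, which is complete.
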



As a last consideration in this subsection, and in view of Theorem~\ref{thm:primitive_iff_compl_reach}, let us derive a sufficient condition
for ``almost complete'' reachability of all subsets of size strictly smaller
than $n - 1$ when adding any function of rank $n - 2$. By Lemma~\ref{lem:k_hom_char},
any such condition must imply $2$-homogeneity.\todo{Für einermengen gleich primitiv? Sollte gleicher Beweis gehen.
ist das in dem fall nicht die bedingung in definition der kanten vom gamma1 graphen?
also dass man für rank $n-1$ beliebige andere rank $n-1$ erzeuge (dahinterschalten bei $f$)
kann mit bestimmten zielpunkt der zweierbild hat (also denn von f mit zweierbild
auf beliebigen anderen abbilden),
also $\{a\}$ excluded, also wenn $f(a) = f(b)$ und $c$ nicht im bild,
dann kante mit f von $c$ zu $f(a)$, und g wählen dass $d$ nicht im bild und $c$ auf $e$,
dann $gf$ (erst $f$, dann $g$) liefert kannte von $c$ zu $e$, dann Gamma $1$-graph
vollständig}

\begin{propositionrep} 
\label{prop:2-hom_all_subsets_n-2} 
 Let $G \le \mathcal S_n$ and $n \ge 3$. Suppose
 the following holds true:
 %
 %
 \begin{quote}
 \label{cond:A_include_exclude} 
   For any $2$-subset $\{a,b\} \subseteq [n]$
   and any $A \subseteq [n]$
   with $1 \le |A| \le n - 2$
   and $c \in [n] \setminus\{a,b\}$ we find $g \in G$ such that
   $
    \{c\} \subseteq g(A) \subseteq [n]\setminus \{a,b\}.  
   $ 
   %
  \end{quote}
  Then, for any function\todo{colon überall.} $f\colon [n] \to [n]$ of rank $n - 2$,
  in $\langle G \cup \{f\} \rangle$ every non-empty subset of size at most $n - 2$
  is reachable. In particular, by Lemma~\ref{lem:k_hom_char},
  $G$ is $2$-homogenoous.
\end{propositionrep}
\begin{proof}
 Suppose $G \le \mathcal S_n$ fulfills the condition.
 Let $\{a_1, a_2\}$ and $\{b_1, b_2\}$
 be two $2$-subsets of $[n]$.
 Set $A = [n]\setminus \{a_1, a_2\}$,
 then we find $g \in G$
 such that $g(A) = [n] \setminus\{b_1, b_2\}$.
 Hence $g(\{a_1, a_2\}) = \{b_1, b_2\}$
 and $G$ is $2$-homogeneous.

 Now, let $f \colon [n] \to [n]$ be a function of rank $n - 2$
 and suppose for $a,b,c,d \in [n]$ we have 
 $f([n]) = [n] \setminus \{a,b\}$ and $f(c) = f(d)$, $c \ne d$.
 By Lemma~\ref{lem:k_hom_char}, as $G$ is $2$-homogeneous,
 the subsets of size $n - 2$
 are reachable. Next, we argue inductively.
 Let $A \subseteq [n]$ be any non-empty subset
 with $|A| < n - 2$ and assume all subsets of size $|A| + 1$ to $n - 2$
 are reachable.
 By assumption, we find $g \in G$
 such that
 $$
  \{f(c)\} \subseteq g(A) \subseteq [n] \setminus \{a,b\}.
 $$
 We have $|f^{-1}(g(A))| \in \{|A|+1, |A|+2\}$.
 Set
 $$
  B = \left\{ 
  \begin{array}{ll}
   f^{-1}(g(A)) \setminus \{c\} & \mbox{if } |f^{-1}(g(A))| = n - 1; \\ 
   f^{-1}(g(A))                 & \mbox{otherwise.}
  \end{array}
  \right.
 $$
 As $\{c,d\} \subseteq f^{-1}(g(A))$ and $f(c) = f(d)$
 we have $f(B) = f(f^{-1}(g(A)))$.
 Inductively, we find $h \in \langle G \cup \{f\} \rangle$
 such that $h([n]) = B$.
 As $g(A) \subseteq f([n])$
 we have
 $$
  g(A) = g(A) \cap f([n]) = f(f^{-1}(g(A))) = f(B) = f(h([n])).
 $$
 and so $A = g^{-1}(f(h([n])))$. \qed

%
%
\end{proof}

\subsection{Sync-Maximal Permutation Groups}
\label{sec:sync_max}


Here, we introduce the sync-maximal permutation groups. For their definition,
we associate to a given group and a non-permutation 
a semi-automaton whose letters are generators
of the group and the non-permutation. This definition is actually independent
of the choice of generators of the group.
But before giving the definition of sync-maximal groups, let us
first state a result linking the notion of complete reachability
to the state complexity of the set of synchronizing words.

\begin{proposition}
\label{prop:groups_max_sc_implies_compl_reach} 
  Let $G = \langle g_1, \ldots, g_k \rangle \le \mathcal S_n$ be a permutation group and $f \colon [n] \to [n]$
  be a non-permutation. Set $\Sigma = \{g_1, \ldots, g_k, f\}$ and $\mathscr A = (\Sigma, [n], \delta)$
  with $\delta(m, g) = g(m)$ for $m \in [n]$ and $g \in \Sigma$.
  If $n > 2$ and $\stc(\Syn(\mathscr A)) = 2^n - n$, then $G$ is transitive
  and $\mathscr A$ completely reachable.
\end{proposition}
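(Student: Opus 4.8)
The plan is to first establish that $G$ is transitive, and then obtain complete reachability by combining transitivity with the already-proved Lemma~\ref{lem:SCC_Syn_max_implies_compl_reach}. Essentially all the work lies in the transitivity part; complete reachability then follows almost formally. First I would record what the hypothesis $\stc(\Syn(\mathscr A)) = 2^n - n$ buys us: by the characterization of maximal state complexity stated in the preliminaries, every subset $S \subseteq [n]$ with $|S| \ge 2$ is reachable in $\mathcal P_{\mathscr A}$ from the start state $[n]$. Since $n > 2$ we have $n - 1 \ge 2$, so in particular every $(n-1)$-subset of $[n]$ is reachable.

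The key step, which I expect to be the crux of the argument, is to pin down the rank of $f$. The only size-$n$ subset of $[n]$ is $[n]$ itself, and the generators $g_1,\dots,g_k$ act as bijections and hence fix $[n]$; so the size of a reachable set can drop only once the letter $f$ is read, and at that first moment it drops from $n$ to $\rank(f) = |f([n])|$. Since neither the group letters (size-preserving) nor further applications of $f$ (non-increasing, and bounded by $\rank(f)$) can ever raise the size again, every reachable \emph{proper} subset has size at most $\rank(f)$. As some $(n-1)$-subset is reachable, this forces $\rank(f) \ge n-1$, and as $f$ is a non-permutation we conclude $\rank(f) = n-1$.

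Now I would invoke Lemma~\ref{lem:k_hom_char} with $k = 1$. We have just exhibited a map, namely $f$ itself, of rank $n - 1 = n - k$ such that in $\langle G \cup \{f\}\rangle$ every subset of size $n-1$ is reachable, which is precisely condition~(3) of that lemma for $k = 1$. Hence $G$ is $1$-homogeneous, equivalently transitive. This is the first of the two assertions.

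Finally, transitivity yields complete reachability. Since $G$ is transitive and finite, every element of $G$ is a positive word in $g_1,\dots,g_k$, so for any two states $p,q$ some word over $\Sigma$ sends $p$ to $q$; thus $\mathscr A$ is strongly connected. Applying Lemma~\ref{lem:SCC_Syn_max_implies_compl_reach} to the strongly connected $\mathscr A$, whose set of synchronizing words has maximal state complexity, gives that $\mathscr A$ is completely reachable, completing the proof. The only genuinely substantive observation is the rank computation of the second paragraph; once $\rank(f) = n-1$ is available, both conclusions reduce to the cited lemmas.
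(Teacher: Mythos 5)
Your proof is correct and follows essentially the same route as the paper's: deduce from maximal state complexity that $(n-1)$-sets are reachable, conclude $\operatorname{rk}(f)=n-1$, apply Lemma~\ref{lem:k_hom_char} (condition~3 with $k=1$) to get transitivity, and then combine strong connectivity with Lemma~\ref{lem:SCC_Syn_max_implies_compl_reach} for complete reachability. You merely spell out the rank argument that the paper leaves implicit.
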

\begin{proof}
  Suppose $n > 2$. As $\stc(\Syn(\mathscr A)) = 2^n - n$, in $\mathcal P_{\mathscr A}$
  sets of size $n - 1$ are reachable\footnote{This argument only works for $n > 2$.
  If $n = 2$, as singletons sets are not distinguishable, if the state complexity
  is maximal, not all singleton sets need to be reachable from $Q$. Also, see
  Remark~\ref{remark:sc_implies_compl_reach}.}.
  Hence, $f$ must have rank $n - 1$.
  Then, by Lemma~\ref{lem:k_hom_char}, the group $G$ is transitive, which 
  implies the semi-automaton $\mathscr A$ is strongly connected.
  So, by Lemma~\ref{lem:SCC_Syn_max_implies_compl_reach},
  the semi-automaton is completely reachable. \qed
\end{proof} 

\begin{remark}
\label{remark:sc_implies_compl_reach}
 For $n = 2$, if $G$ only contains the identity transformation, then adding
 any non-permutation gives an automaton such that the set of synchronizing
 words has state complexity two, but it is not completely reachable nor is $G$
 transitive.
 Also, note that the assumption $\stc(\Syn(\mathscr A)) = 2^n - n$
 implies that $f$ must have rank $n - 1$, for otherwise sets of size $n - 1$
 are not reachable.
\end{remark}

\begin{definition}
\label{def:sync-maximal}
  A permutation group $G = \langle g_1, \ldots, g_k \rangle \le \mathcal S_n$ 
  is called \emph{sync-maximal}, if for any map $f \colon [n] \to [n]$ of rank $n - 1$,
  for the automaton $\mathscr A = (\Sigma, [n], \delta)$
  with $\Sigma = \{g_1, \ldots, g_k, f\}$ and
  $\delta(m, g) = g(m)$ for $m \in [n]$ and $g \in \Sigma$,
  we have $\stc(\Syn(\mathscr A)) = 2^n - n$,
\end{definition}


As written, the definition involves a specific set of generators for $G$.
But the resulting transformation monoids are equal for different generators
and we can write one set of generators in terms of another.
So reachability of subsets and distinguishability of subsets is preserved
by a change of generators. Hence, the definition is actually independent of the specific choice of generators for $G$.
This might be different if we are concerned with the length of shortest words to reach
certain subsets, but that is not part of the definition.

\begin{proposition}
\label{cor:max_sc_implies_primitive}
  Every sync-maximal permutation group is primitive.
\end{proposition}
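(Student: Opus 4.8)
The statement to prove is that every sync-maximal permutation group $G \le \mathcal S_n$ is primitive. The plan is to chain together the two results already established: Proposition~\ref{prop:groups_max_sc_implies_compl_reach}, which converts maximal state complexity of $\Syn(\mathscr A)$ into complete reachability of the associated automaton, and Theorem~\ref{thm:primitive_iff_compl_reach}, which characterizes primitivity exactly by complete reachability of $\langle G \cup \{f\}\rangle$ for \emph{every} rank $n-1$ transformation $f$. The heart of the argument is simply to observe that the universal quantifier over rank $n-1$ maps in the definition of sync-maximality (Definition~\ref{def:sync-maximal}) matches the universal quantifier in Theorem~\ref{thm:primitive_iff_compl_reach}.

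First I would dispose of the small cases. For $n \le 2$ every permutation group is primitive by the convention fixed in the preliminaries, so the implication holds vacuously and we may assume $n \ge 3$.

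Now fix an arbitrary transformation $f \colon [n] \to [n]$ of rank $n-1$, and form the semi-automaton $\mathscr A = (\Sigma, [n], \delta)$ with $\Sigma = \{g_1, \ldots, g_k, f\}$ as in Definition~\ref{def:sync-maximal}, where $g_1, \ldots, g_k$ generate $G$. Since $G$ is sync-maximal we have $\stc(\Syn(\mathscr A)) = 2^n - n$. Because $n > 2$, Proposition~\ref{prop:groups_max_sc_implies_compl_reach} applies and yields that $\mathscr A$ is completely reachable. By the definition of the transformation monoid of an automaton, $\mathcal T_{\mathscr A} = \langle\{\delta_x \mid x \in \Sigma\}\rangle = \langle G \cup \{f\}\rangle$, and complete reachability of $\mathscr A$ says precisely that for every non-empty $S \subseteq [n]$ there is a word $w$ with $\delta([n], w) = S$, i.e.\ an element of $\langle G \cup \{f\}\rangle$ whose image is $S$. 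Since $f$ was an arbitrary rank $n-1$ map, this is exactly the right-hand condition of Theorem~\ref{thm:primitive_iff_compl_reach}, so $G$ is primitive.

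I do not expect a genuine obstacle here, since the substantive content is already contained in the cited results; the only care needed is bookkeeping. The one point to verify explicitly is that the semi-automaton $\mathscr A$ produced from the generators of $G$ and a non-permutation $f$ has transformation monoid equal to $\langle G \cup \{f\}\rangle$ (so that ``completely reachable automaton'' translates faithfully into the image condition of Theorem~\ref{thm:primitive_iff_compl_reach}), and that sync-maximality supplies maximal state complexity for \emph{all} rank $n-1$ maps $f$, not merely one, which is what the universal quantifier of Theorem~\ref{thm:primitive_iff_compl_reach} requires.
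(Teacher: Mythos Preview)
Your proposal is correct and follows essentially the same route as the paper: handle $n\le 2$ by the convention on primitivity, then for $n\ge 3$ feed the sync-maximality hypothesis into Proposition~\ref{prop:groups_max_sc_implies_compl_reach} to obtain complete reachability for every rank $n-1$ map $f$, and conclude primitivity via Theorem~\ref{thm:primitive_iff_compl_reach}. The paper's write-up is terser and also notes (in a footnote) the alternative of invoking Theorem~\ref{thm:rystsov} directly from the existence of a constant map, but the substance is the same.
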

\begin{proof} 
 By our definition, every permutation group of degree $n \le 2$
 is primitive. Note that this might be different in the literature, but 
 is in concordance with~\cite{DBLP:journals/tcs/ArnoldS06}.
 If $n > 2$, then, by Proposition~\ref{prop:groups_max_sc_implies_compl_reach},
 the automaton, associated to $G$ and a function of rank $n - 1$ as in Proposition~\ref{prop:groups_max_sc_implies_compl_reach}, is completely reachable. Hence, by Theorem~\ref{thm:primitive_iff_compl_reach},
 the group $G$ is primitive\footnote{Alternatively, by noting that $\stc(\Syn(\mathscr A)) = 2^n - n$
 implies that the automaton $\mathscr A$ has at least one  synchronizing word, i.e.,
 the transformation monoid admits a constant map. Then, invoking
 Theorem~\ref{thm:rystsov} gives primitivity.}.~\qed
\end{proof}

\begin{remark}
 By case analysis, note that for $n \le 2$ every group is sync-maximal.
 But also for $n \le 2$, by our definition of primitivity,
 every permutation group is primitive. See the explanations in Section~\ref{sec::preliminaries}.
\end{remark}

Next, we relate the condition that the set of synchronizing words has maximal state complexity
to the notion of $2$-homogeneity. However, as shown in Example~\ref{ex:2-hom},
we do not get a characterization of $2$-homogeneity similar to Theorem~\ref{thm:primitive_iff_compl_reach}
for primitivity.

\begin{proposition}
\label{prop:2_hom_perm_group}
 If $G \le \mathcal S_n$ is $2$-homogeneous, then $G$ is sync-maximal.
\end{proposition}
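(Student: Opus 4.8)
The plan is to reduce the statement to Lemma~\ref{lem:compl_reach_implies_max_Syn_2sets}, which for a completely reachable automaton reduces maximal state complexity to the pairwise distinguishability of all $2$-sets in the power automaton. First I would dispose of the small cases: for $n \le 2$ every group is sync-maximal (as noted in the remark above), so I may assume $n \ge 3$. For such $n$ a $2$-homogeneous group is primitive, so Theorem~\ref{thm:primitive_iff_compl_reach} applies: for any $f \colon [n] \to [n]$ of rank $n-1$ the associated semi-automaton $\mathscr A$ is completely reachable. This establishes the hypothesis of Lemma~\ref{lem:compl_reach_implies_max_Syn_2sets}, leaving only the distinguishability of the $2$-sets to verify.

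The core of the argument is this distinguishability step, and it is exactly here that full $2$-homogeneity (rather than mere primitivity) is used. Since $f$ has rank $n-1$, there is a \emph{unique} unordered pair $\{a,b\}$ with $a \ne b$ and $f(a) = f(b)$; every other $2$-set is mapped injectively by $f$, hence again onto a $2$-set. Now take two distinct $2$-sets $A, B \subseteq [n]$. By $2$-homogeneity I can choose $g \in G$ with $g(A) = \{a,b\}$; since $g$ is a bijection and $A \ne B$, necessarily $g(B) \ne \{a,b\}$. Then the composite $f \circ g$ that first applies $g$ and then $f$—realized by a representative word for $g$ followed by the letter $f$—collapses $A$ to the singleton $\{f(a)\}$, while $g(B)$, and hence $f(g(B))$, remains a $2$-set. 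In the power automaton $\mathcal P_{\mathscr A}$ this means $A$ reaches a final (singleton) state whereas $B$ does not, so $A$ and $B$ are distinguishable.

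With all distinct $2$-sets shown to be pairwise distinguishable and $\mathscr A$ completely reachable, Lemma~\ref{lem:compl_reach_implies_max_Syn_2sets} yields $\stc(\Syn(\mathscr A)) = 2^n - n$. As this holds for every $f$ of rank $n-1$, the group $G$ is sync-maximal by Definition~\ref{def:sync-maximal}. I do not expect a serious obstacle here: the only points requiring care are the bookkeeping of the composition convention (the word for $g$ must precede the letter $f$ so that $g$ acts first) and the observation that rank $n-1$ forces a single collision pair $\{a,b\}$—which is precisely what lets $2$-homogeneity aim any chosen $2$-set at that pair and thereby separate it from every other $2$-set.
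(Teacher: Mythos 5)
Your proposal is correct and follows essentially the same route as the paper: reduce to complete reachability via $2$-homogeneous $\Rightarrow$ primitive and Theorem~\ref{thm:primitive_iff_compl_reach}, then use Lemma~\ref{lem:compl_reach_implies_max_Syn_2sets} and $2$-homogeneity to send any chosen $2$-set onto the unique collision pair of $f$, separating it from every other $2$-set. The only (harmless) difference is that you handle $n \le 2$ explicitly and work with a general collision pair $\{a,b\}$ where the paper normalizes to $f(0)=f(1)$.
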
 
\begin{proof} 
 Assume $n > 1$, otherwise the statement is trivially true.
 Suppose $G = \langle g_1, \ldots, g_k \rangle$
 is $2$-homogeneous and let $f \colon [n] \to [n]$ be of rank $n - 1$.
 Without loss of generality, assume $f(0) = f(1)$.
 As every $2$-homogeneous permutation group is primitive~\cite{DBLP:journals/tcs/ArnoldS06},
 by Theorem~\ref{thm:primitive_iff_compl_reach},
 the semi-automaton $\mathscr A$ is completely reachable.
 By Lemma~\ref{lem:compl_reach_implies_max_Syn_2sets},
 it is enough to show that all $2$-sets are distinguishable.
 Let $\{q_1, q_2\}, \{q_3, q_4\} \subseteq Q$ be two distinct $2$-sets.
 By $2$-homogeneity of $G$, we find $g \in G$
 such that $g(\{q_1, q_2 \}) = \{0,1\}$. Then $g(\{ q_3, q_4 \}) \ne \{0,1\}$. 
 Hence, $|f(g(\{q_1, q_2 \}))| = 1$ 
 and $|f(g(\{q_3, q_4 \}))| = 2$, and the function $fg$ could be written as a word over 
 the generators of $G$ and $f$. $\qed$
\end{proof}

%
%
%
%
%
%
%
%
%
%
%
%
%
%
%
%

The next example shows that the converse of Proposition~\ref{prop:2_hom_perm_group}
does not hold.

\begin{example} 
\label{ex:2-hom}
Let\todo{Zyklennotation nicht eingführt.}
$g \colon [5] \to [5]$ be the cycle given by $g(i) = i + 1$ for $i \in \{0,\ldots,3\}$ and $g(4) = 0$.
Set $G = \langle \{ g \} \rangle$.
Then, as it is a cycle of prime length, $G$ is primitive.
So, by Theorem~\ref{thm:primitive_iff_compl_reach}, for any $f \colon [n] \to [n]$ of rank $n - 1$
the transformation semigroup $\langle G \cup \{f\} \rangle$ is completely reachable.
Also, we show $\stc(\Syn(\mathscr A)) = 2^n - n$.
But $G$ is not $2$-homogeneous.
We have the following two orbits on the $2$-sets: 
$
 A = \{ \{ 1,2 \}, \{2,3\}, \{3,4\}, \{4,0\}, \{0,1\} \}$ and 
 $
 B = \{ \{0,2\}, \{1,3\}, \{2,4\}, \{3,0\}, \{4,1\}\}.
$
Let $f \colon [n] \to [n]$ be any map of rank $n-1$. Without loss of generality, 
we can assume $f(0) = f(1)$. Then, two distinct $2$-sets are distinguishable,
if one could be mapped to $\{0,1\}$, but not the other, as a final application of $f$
gives that one is mapped to a singleton, but not the other.
First, note that all $2$-sets in $A$ are distinguishable, as for each
$\{x,y\} \in A$ we find a unique $0 \le k < |A|$ such
that $g^k(\{z,v\})= \{0,1\}$ if and only if $\{z,v\} = \{x,y\}$
for each $\{z,v\} \in A$, as $g$ permutes $A$.
If we have any $\{x,y\} \in B$ such that
$f(\{x,y\}) \in A$, then all sets in $B$ are distinguishable.
For if $\{z,u\} \in B$ there exists a unique $g^k$
with $0 \le k < |B|$, $g^k(\{z,u\}) = \{x,y\}$
and\todo{Hier tauchte das $g^k$ noch rechts auf, steht jetzt so auf arxiv.}
$g^k(\{z', u'\}) \ne \{x,y\}$ for each other $\{z', u'\} \in B \setminus \{\{z,u\}\}$.
As $\{0,1\} \notin B$ and $f$ is injective on $\{1,2,3,4\}$
and on $\{0,2,3,4\}$, $f$ is injective on $B$.
Hence $f(g^k(\{ z', u' \}) \ne f(g^k(\{z,u\}))$ for the previously chosen $\{z,u\}\in B$
and $\{z',u'\} \in B \setminus \{ \{ z,u \} \}$.
Now, choose $g^l$ such that $g^l(f(g^k(\{z,u\}))) = \{0,1\}$.
In any case, i.e., whether $f(g^k(\{ z', u' \}))$ is in $B$ or in $A$,
we have $g^l(f(g^k(\{z',u'\}))) \ne \{0,1\}$.
So, all $2$-sets in $A$ are distinguishable and all $2$-sets in $B$.
That a $2$-set from $A$ is distinguishable from any $2$-set in $B$
is clear, as we can map the $2$-set from $A$ to $\{0,1\}$ by a power of $g$,
and the one from $B$ would not be mapped to $\{0,1\}$.
Lastly, we show that we must have some $\{x,y\} \in B$
with $f(\{x,y\}) \in A$, which gives the claim.
Consider the sets $\{0,2\}, \{0,3\}, \{1,4\}$ and $\{1,3\}$ from $B$
and suppose their images are all contained in $B$, i.e.,
$\{ \{ f(0), f(2) \}, \{ f(0), f(3) \}, \{ f(0), f(4) \} \} \subseteq B$.
But in $B$ at most two sets share an element, hence
$|\{ f(2), f(3), f(4) \}| \le 2$, which is not possible as $f$
has rank $n-1$ and we already have $f(0) = f(1)$.
So, some image of these three sets must be in $A$. $\qed$
\end{example}

\section{$k$-Reachable Permutation Groups}
\label{sec:k_reachable_groups}

%
%

%
%

A permutation group $G \le \mathcal S_n$ is called \emph{synchronizing}, if
for any non-permutation the transformation monoid $\langle G \cup \{ f \} \rangle$
contains a constant map. This notion was introduced in~\cite{DBLP:journals/tcs/ArnoldS06}.
For further information on synchronizing groups and its relation
to the \v{C}ern\'y conjecture, see the survey~\cite{ArCamSt2017}.
In~\cite{ArCamSt2017}, the question was asked to detect properties of permutation groups
by functions. 
Theorem~\ref{thm:primitive_iff_compl_reach} and Theorem~\ref{thm:rystsov} are in this vain.
Note that every synchronizing group is primitive~\cite{DBLP:journals/tcs/ArnoldS06},
but not conversely~\cite{Neumann09}.
By Theorem~\ref{thm:primitive_iff_compl_reach}, primitive groups have the
property that if we add any function of rank $n-1$ every non-empty subset is reachable.
Motivated by this, we introduce $k$-reachable groups, which generalize
the condition of complete reachablity mentioned in Theorem~\ref{thm:primitive_iff_compl_reach}.

\begin{definition} 
 A permutation group over the finite set $[n]$
 with $n > 1$
 is called \emph{$k$-reachable}, if for any
 map $f : [n] \to [n]$ of rank $n - k$
 all subsets of cardinality
 $$
  n - k, n - 2k, \ldots, n - \left( \lceil n / k \rceil - 1 \right)\cdot k
 $$
 are reachable, i.e., we have some transformation
 in the transformation monoid generated by $G$ and $f$
 which maps $[n]$ to any such set.
\end{definition}

By Theorem~\ref{thm:primitive_iff_compl_reach}, the $1$-reachable group are precisely the
primitive groups.
Also note that $(n-1)$-reachable is the same as transitivity.

\begin{proposition}
\label{prop:k_reachable_implies_k_hom}
 A $k$-reachable permutation group is $k$-homogeneous.
\end{proposition}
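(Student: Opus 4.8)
The plan is to derive $k$-homogeneity from $k$-reachability by exploiting the topmost reachability guarantee, namely that every subset of size $n-k$ is reachable, and then running essentially the same argument used in the proof of Lemma~\ref{lem:k_hom_char}. Concretely, suppose $G \le \mathcal S_n$ is $k$-reachable. Pick any map $f \colon [n]\to [n]$ of rank $n-k$; by the definition of $k$-reachability, in $\langle G \cup \{f\}\rangle$ every subset of size $n-k$ is reachable. But this is exactly Condition~\ref{cond:some_f} of Lemma~\ref{lem:k_hom_char} for this particular $f$ (there exists a map of rank $n-k$ such that every subset of size $n-k$ is reachable). Since Lemma~\ref{lem:k_hom_char} states the equivalence of that condition with $k$-homogeneity, I would simply invoke it to conclude that $G$ is $k$-homogeneous.

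The bulk of the argument is therefore already packaged inside Lemma~\ref{lem:k_hom_char}, so the proof reduces to checking that the hypotheses line up. The one point that needs a sentence of care is that $k$-reachability guarantees reachability of the whole list of sizes $n-k, n-2k, \ldots$, whereas Lemma~\ref{lem:k_hom_char} only requires reachability of the single size $n-k$; this is fine, as I only need the weakest of these guarantees, so the implication goes through immediately. I would state explicitly that reachability of size $n-k$ for \emph{some} $f$ of rank $n-k$ is all that Condition~\ref{cond:some_f} demands.

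The main obstacle, if any, is purely a matter of definitional bookkeeping: one should make sure that the convention $n>1$ in the definition of $k$-reachable groups and the range of $k$ (so that $n-k$ is a meaningful rank, i.e.\ $1 \le k \le n-1$) are compatible with the hypotheses of Lemma~\ref{lem:k_hom_char}, which is stated for arbitrary $G \le \mathcal S_n$ and a map of rank $n-k$. For $k=n$ the statement is vacuous or degenerate, so I would implicitly restrict attention to $1 \le k \le n-1$ as forced by the requirement that $f$ have rank $n-k \ge 1$. With these ranges fixed, no genuine combinatorial work remains: the proof is a one-line deduction from the previously established characterization, and I would write it as such, namely that the definition of $k$-reachability directly supplies Condition~\ref{cond:some_f} of Lemma~\ref{lem:k_hom_char}, whence $G$ is $k$-homogeneous.
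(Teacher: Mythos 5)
Your proof is correct and is essentially identical to the paper's, which likewise disposes of the proposition in one line by noting that the definition of $k$-reachability supplies the reachability of all $(n-k)$-subsets required by Lemma~\ref{lem:k_hom_char} (indeed you get Condition~\ref{cond:any_f}, which is even stronger than the Condition~\ref{cond:some_f} you invoke). Your additional remarks on the admissible range of $k$ are reasonable bookkeeping but do not change the argument.
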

\begin{proof} 
 This is implied by Lemma~\ref{lem:k_hom_char}
 and the definition of $k$-reachability.~\qed
\end{proof}

The reverse implication does not hold in the previous proposition.
For example, we find $1$-homogeneous, i.e., transitive groups, which
are not $1$-reachable, i.e., primitive by Theorem~\ref{thm:primitive_iff_compl_reach}.
By a result of Livingstone and Wagner~\cite{Huppert82,Livingstone1965},
for $5 \le k \le n/2$ a $k$-homogeneous permutation group of degree $n$
is $k$-transitive, and for $k \le n/2$
a $k$-homogeneous permutation group is also $(k-1)$-homogeneous. 
As $2$-homogeneity implies synchronizability~\cite{ArCamSt2017}, combined with the fact that $k$-homogeneity
is equivalent with $(n - k)$-homogeneity, 
a $k$-reachable group for any $1 < k < n-1$ is synchronizable
and we get the next statement together with our previous results.

\begin{proposition}
 A $k$-reachable permutation group of degree $n$ for $1 < k < n - 1$ is synchronizable. 
 For $k = 1$ we have precisely the primitive permutation groups,
 and for $k = n - 1$ precisely the permutation groups
 which are transitive in their action.
\end{proposition}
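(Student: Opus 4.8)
The plan is to treat the three regimes of $k$ separately, each reducing quickly to a result already in hand. For $k = 1$, I would simply unfold the definition: a $1$-reachable group is one for which adding any rank-$(n-1)$ map makes every subset of cardinality $n-1, n-2, \ldots, 1$ reachable, and this is exactly complete reachability of the associated transformation monoid. Hence Theorem~\ref{thm:primitive_iff_compl_reach} identifies the $1$-reachable groups precisely with the primitive groups, giving the second assertion.

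For $k = n-1$, the admissible maps $f$ have rank $n - (n-1) = 1$, i.e.\ they are constant maps, and since $\lceil n/(n-1)\rceil - 1 = 1$, the definition requires only that every singleton be reachable. If $f$ is the constant map with image $\{p\}$, then any element of $\langle G \cup \{f\}\rangle$ whose image on $[n]$ is a singleton must itself be constant, and a short computation (using $gf = c_{g(p)}$ and $fg = c_p$) shows the reachable singletons are exactly $\{\,\{g(p)\} \mid g \in G\,\}$, namely the $G$-orbit of $p$. Thus every singleton is reachable for each such $f$ if and only if the orbit of $p$ is all of $[n]$, i.e.\ iff $G$ is transitive, giving the third assertion.

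The substantive case is $1 < k < n-1$, where I want synchronizability. First, Proposition~\ref{prop:k_reachable_implies_k_hom} gives that $G$ is $k$-homogeneous. Using the standard equivalence of $k$-homogeneity with $(n-k)$-homogeneity (already invoked in Lemma~\ref{lem:k_hom_char}), set $m = \min\{k, n-k\}$; the hypothesis $1 < k < n-1$ forces $2 \le m \le n/2$, and $G$ is $m$-homogeneous. I would then descend via the Livingstone--Wagner theorem, by which $j$-homogeneity implies $(j-1)$-homogeneity for every $j$ with $2 \le j \le n/2$. Applying this repeatedly from $j = m$ down to $j = 2$ yields that $G$ is $2$-homogeneous, and $2$-homogeneity implies synchronizability by~\cite{ArCamSt2017}, completing this case.

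The only point requiring care is the bookkeeping in the last paragraph: one must confirm that passing to $m = \min\{k, n-k\}$ lands in $[2, n/2]$ so that Livingstone--Wagner applies, and that the descent never leaves this interval. Since $1 < k < n-1$ translates into $2 \le k$ and $2 \le n-k$, both values are at least $2$ and their minimum is at most $n/2$, so every intermediate $j$ stays admissible and no genuine difficulty arises beyond correctly exploiting the $k \leftrightarrow n-k$ symmetry.
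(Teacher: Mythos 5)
Your proposal is correct and follows essentially the same route as the paper: the paper also identifies the $1$-reachable groups with the primitive ones via Theorem~\ref{thm:primitive_iff_compl_reach}, notes that $(n-1)$-reachability amounts to transitivity, and for $1<k<n-1$ combines Proposition~\ref{prop:k_reachable_implies_k_hom} with the $k\leftrightarrow n-k$ symmetry and the Livingstone--Wagner descent to $2$-homogeneity, which implies synchronizability. Your treatment of the $k=n-1$ case is slightly more explicit than the paper's one-line remark, but the argument is the same.
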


For $k \in \{2,3,4\}$ the non-$k$-transitive but $k$-homogeneous groups
where determined by Kantor \cite{Kantor1972}. A list of all possible $k$-transitive groups of finite degree for $k \ge 2$
could be found in \cite{cameron_1999}. For $k \ge 6$ the only cases
are the symmetric group or the alternating group. 
So, if we want to formulate a stronger version of Theorem~\ref{thm:primitive_iff_compl_reach}
by assuming $k$-reachability for all $1 \le k \le n$, for
$n \ge 6$ only the symmetric and the alternating group fulfill this condition.

\begin{proposition}
  If a permutation group of degree $n \ge 6$ is $k$-reachable for all
  $1 \le k \le n - 1$, then it is either the symmetric group or the alternating group. 
\end{proposition}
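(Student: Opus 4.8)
The plan is to leverage the machinery already assembled in this section, so that the proof reduces to a short chain of citations rather than any new combinatorial argument. By Proposition~\ref{prop:k_reachable_implies_k_hom}, a group that is $k$-reachable is in particular $k$-homogeneous; hence a group that is $k$-reachable for all $1 \le k \le n-1$ is $k$-homogeneous for all such $k$. The key observation is that it then suffices to exhibit a single value of $k$ in the range $[1,n-1]$ at which $k$-homogeneity already forces the group to be the symmetric or the alternating group.

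First I would invoke the classification facts quoted just before the statement: by the Livingstone--Wagner result~\cite{Huppert82,Livingstone1965}, for $5 \le k \le n/2$ a $k$-homogeneous group of degree $n$ is actually $k$-transitive, and by the classification of $k$-transitive groups~\cite{cameron_1999}, for $k \ge 6$ the only possibilities are $\mathcal S_n$ and the alternating group. So the clean strategy is to choose $k = 6$, provided $6 \le n/2$, i.e.\ $n \ge 12$; then $6$-homogeneity gives $6$-transitivity, which gives $\mathcal S_n$ or the alternating group, and we are done. The residual cases are the small degrees $6 \le n \le 11$, which I would handle by picking, for each such $n$, some admissible $k \le n-1$ at which the classification still applies. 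Using the symmetry $k$-homogeneous $\iff (n-k)$-homogeneous, for $n \ge 12$ one could equally well work with $k = n - 6$; for the small degrees one dispatches them by a finite inspection of the known lists of $k$-homogeneous but not $k$-transitive groups (for $k \in \{2,3,4\}$ these were determined by Kantor~\cite{Kantor1972}) together with the $k$-transitive lists, verifying in each case that simultaneous $k$-homogeneity across the whole range $1 \le k \le n-1$ excludes every non-symmetric, non-alternating candidate.

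I expect the main obstacle to be precisely the bookkeeping for the small degrees $6 \le n \le 11$, where $k = 6$ is not available within the range $k \le n/2$ and one cannot simply quote the $6$-transitive classification. Here the argument has to rely on the finiteness of the sporadic $k$-homogeneous families: one must check that no Mathieu group, projective group, or other exceptional $k$-homogeneous group of such a degree is simultaneously $k$-homogeneous for \emph{all} $1 \le k \le n-1$, since the requirement at several different values of $k$ is very restrictive and the exceptional groups fail $k$-homogeneity for most $k$. This is a finite verification rather than a deep argument, but it is the only part that does not follow by a direct citation. Everything else is an immediate consequence of Proposition~\ref{prop:k_reachable_implies_k_hom} and the quoted classification theorems.
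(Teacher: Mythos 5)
Your route is the same one the paper implicitly takes: reduce $k$-reachability to $k$-homogeneity via Proposition~\ref{prop:k_reachable_implies_k_hom}, then invoke Livingstone--Wagner ($k$-homogeneous implies $k$-transitive for $5 \le k \le n/2$) and the classification of $k$-transitive groups for $k \ge 6$. For $n \ge 12$ this is complete: taking $k = 6 \le n/2$, $6$-homogeneity forces $6$-transitivity, which forces the symmetric or alternating group. You are also right that the degrees $6 \le n \le 11$ are where the work remains.

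However, the way you propose to close those degrees --- a finite check that simultaneous $k$-homogeneity for all $1 \le k \le n-1$ already excludes every non-symmetric, non-alternating candidate --- cannot succeed, because that statement is false. The group $\mathrm{PGL}(2,5)$ of degree $6$ and order $120$ is sharply $3$-transitive, hence $1$-, $2$- and $3$-homogeneous, and by the symmetry between $k$- and $(n-k)$-homogeneity it is $k$-homogeneous for every $1 \le k \le 5$, yet it is neither $\mathcal S_6$ nor the alternating group of degree $6$. Likewise $\mathrm{P}\Gamma\mathrm{L}(2,8)$ of degree $9$ is $3$-transitive and, by Kantor's list, $4$-homogeneous, hence $k$-homogeneous for all $1 \le k \le 8$. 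So at degrees $6$ and $9$ the reduction to homogeneity loses too much information, and any proof must exploit the full strength of $k$-reachability --- for instance, by exhibiting for these specific groups a map $f$ of rank $n-k$ for which some required subset is not reachable in $\langle G \cup \{f\}\rangle$. Your proposal does not supply such an argument, so the proposition remains unproved for $n \in \{6,9\}$ (and unverified, though plausibly still reducible to homogeneity, for $n \in \{7,8,10,11\}$). In fairness, the paper offers nothing more than the same citation chain, and its ``more specific'' follow-up proposition retreats to the range $6 \le k \le n-6$, which is non-vacuous only for $n \ge 12$ --- precisely where the argument is clean.
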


Or to be more specific.

\begin{proposition}
 If a permutation group of degree $n$
 is $k$-reachable for $6 \le k \le n - 6$, then 
 it is either the symmetric or the alternating group.
\end{proposition}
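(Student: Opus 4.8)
The plan is to reduce $k$-reachability to multiple transitivity and then invoke the classification of highly transitive groups. By Proposition~\ref{prop:k_reachable_implies_k_hom}, a $k$-reachable group $G \le \mathcal S_n$ is $k$-homogeneous, so it suffices to show that a $k$-homogeneous group of degree $n$ with $6 \le k \le n - 6$ is the symmetric or the alternating group.

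The key device is that $k$-homogeneity is equivalent to $(n-k)$-homogeneity, a fact already used in Lemma~\ref{lem:k_hom_char}. Set $m = \min\{k, n-k\}$. Since $6 \le k$ and $6 \le n-k$ (the latter from $k \le n-6$), we have $m \ge 6$, and by construction $m \le n/2$. Applying the equivalence, $G$ is $m$-homogeneous with $6 \le m \le n/2$. The point of the hypothesis $k \le n-6$ is exactly that whichever of $k$ and $n-k$ lies below $n/2$ is still at least $6$; this is why the symmetric range $[6, n-6]$ appears in the statement.

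Next I would apply the Livingstone--Wagner theorem~\cite{Huppert82,Livingstone1965}: for $5 \le m \le n/2$, every $m$-homogeneous permutation group of degree $n$ is in fact $m$-transitive. Since $6 \le m \le n/2$, this yields that $G$ is $m$-transitive with $m \ge 6$. Finally, the classification of $m$-transitive finite groups for $m \ge 6$ (see~\cite{cameron_1999}) leaves only the symmetric group $\mathcal S_n$ and the alternating group, which completes the argument.

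The main obstacle here is purely bookkeeping rather than deep: the Livingstone--Wagner statement is only available in the range $m \le n/2$, so when $k > n/2$ one cannot feed $k$ into it directly and must instead pass to the complementary value $n-k$ via the homogeneity equivalence. Verifying that this dualization keeps us inside the usable range $[6, n/2]$ is the one step that genuinely uses both bounds $6 \le k$ and $k \le n-6$ simultaneously; everything else is an immediate chaining of the cited results.
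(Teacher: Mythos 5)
Your proof is correct and follows exactly the chain the paper relies on (it states this proposition without a displayed proof, as a consequence of the preceding discussion): $k$-reachability gives $k$-homogeneity via Proposition~\ref{prop:k_reachable_implies_k_hom}, the equivalence of $k$- and $(n-k)$-homogeneity moves you into the range where Livingstone--Wagner upgrades homogeneity to transitivity, and the classification of $m$-transitive groups for $m \ge 6$ finishes the argument. Your explicit bookkeeping with $m = \min\{k, n-k\}$ just makes precise the step the paper leaves implicit.
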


Note that the classification of the $k$-transitive groups of finite degree for $k > 1$ cited above relies
on the classification of the finite simple groups, see~\cite{Solomon01}
for an account of this important and highly non-trivial result.




\section{Conclusion}


 We have given a new characterization for primitive permutation groups.
 In an analogous way, with the property that the set of synchronizing words
 has maximal state complexity, we have introduced the class of sync-maximal 
 permutation groups.
 We have shown that the sync-maximal permutation groups are primitive and
 that $2$-homogeneous groups are sync-maximal. 
 Example~\ref{ex:2-hom} shows that not every sync-maximal permutation group is $2$-homogeneous.
 However, we do not know if the converse of Proposition~\ref{cor:max_sc_implies_primitive}
 is true, i.e., does there exist a primitive permutation group that is not sync-maximal?
 More results on the structure of the sync-maximal permutation groups would be
 highly interesting and might be the goal of future investigations.
 Also, for future investigations, their relation to the synchronizing groups, as introduced in~\cite{DBLP:journals/tcs/ArnoldS06},
 is of interest, as synchronizing groups lie strictly between the $2$-homogenous
 and primitive groups~\cite{ArCamSt2017,DBLP:journals/tcs/ArnoldS06,Neumann09}.
 More specifically, let us mention that in~\cite{ArCamSt2017} a whole hierarchy
 of permutation groups was surveyed.
 The hierarchy is the following,
 stated without formally introducing
 every term: \vspace{-0.3cm}
 \begin{multline*}
     2\mbox{-transitive} \subsetneq 2\mbox{-homogeneous} 
   \subsetneq \mathbb QI 
   \subseteq  \mbox{spreading} 
   \subsetneq  \mbox{separating} \\
   \subsetneq  \mbox{synchronizing}
   \subsetneq  \mbox{primitive} 
   \subsetneq  \mbox{transitive}.
 \end{multline*}
 Note that it is unknown if the inclusion between
 the $\mathbb Q I$-groups
 and the spreading groups is proper. The question of fitting
 the sync-maximal groups more precisely
 into this hierarchy arises naturally.
 Lastly, we introduced $k$-reachable permutation groups.
 But for 
 most $k$ these do not give any groups beside the symmetric and alternating
 groups.
 A more close investigation and characterization of these groups for $k \in \{2,3,4\}$
 is still open. 
 
 \medskip
{\footnotesize 
\noindent\textbf{Acknowledgement:} I thank my supervisor, Prof. Dr. Henning Fernau, for giving valuable feedback, discussions and research suggestions concerning the content of this article. I also thank Prof. Dr.
Mikhail V. Volkov for introducing
our working group to the idea of completely reachable automata at a joint workshop in Trier in the spring of 2019, from which the present work draws inspiration. Lastly, the argument in the proof of Theorem~\ref{thm:primitive_iff_compl_reach}, which closely resembles a proof from~\cite{DBLP:journals/tcs/ArnoldS06},
was communicated to me by an anonymous referee of a considerable premature 
version of this work. I thereby sincerely thank the referee for this and other remarks.
I also thank another anonymous referee for Remark~\ref{rem:idempotent}
and other suggestions related to the content of this work. In case a referee is wondering
why an example from the submitted version is missing, it contained a subtle error and
I was unable to fix it in due time
for the final version.
}


{
\bibliographystyle{splncs04}
\bibliography{ms} 
}


\end{document}